\documentclass[11pt,a4paper]{amsart}
\usepackage[top=4cm,bottom=4cm,left=3cm,right=3cm]{geometry}
\usepackage{amsmath,amssymb,amsfonts,amsthm}
\usepackage{graphicx}
\usepackage{textcomp}
\usepackage[dvipsnames]{xcolor}
\usepackage{url}

\usepackage[utf8]{inputenc}
\usepackage[T1]{fontenc}
\usepackage[nolist]{acronym}

\usepackage{amsmath, amssymb, amsfonts, amsthm}
\usepackage[all]{xy}
\usepackage{hyperref}
\usepackage{doi}
\usepackage{url}
\usepackage{mathtools}
\usepackage{calc}  
\usepackage{enumerate}
\usepackage{enumitem}
\usepackage{float}
\usepackage{bbm}
\usepackage{verbatim}
\usepackage{anyfontsize}

\usepackage{graphicx}
\graphicspath{{images/}{../images/}}
\usepackage{subcaption}
\usepackage{tabularx}
\usepackage{array}
\usepackage{wrapfig}
\usepackage{longtable}
\usepackage{makecell}
\usepackage{multirow, multicol}
\usepackage{ltablex}\keepXColumns

\newcolumntype{Y}{>{\centering\arraybackslash}X}
\newcolumntype{Z}{>{\scriptsize}Y}
\usepackage{rotating}

\usepackage{tikz}
\usetikzlibrary{shapes,arrows,positioning,arrows.meta,chains,positioning,quotes,fit}
\usepackage{blkarray}
\usepackage{tkz-euclide}
\usetikzlibrary{patterns} 
\usepackage{mathrsfs}
\usetikzlibrary{fadings}
\usetikzlibrary{shadows.blur}
\usetikzlibrary{spy}
\usetikzlibrary{decorations.pathreplacing}

\usepackage{nicematrix}

\usepackage{pgfplots}
\usepgfplotslibrary{colorbrewer}
\pgfplotsset{compat=1.17}

\usepackage{subfiles}
\usepackage{blindtext}
\usepackage{chngcntr}

    \usepackage{todonotes}
\usepackage{diagbox}
\usepackage{cleveref}
\usepackage{autonum}

\usepackage{centernot}

\usepackage{algorithm}
\usepackage{algpseudocode}

\usepackage[final]{pdfpages}

\usepackage{appendix}

\newtheorem{theorem}{Theorem}

\newtheorem{lemma}[theorem]{Lemma}
\newtheorem{proposition}[theorem]{Proposition}
\newtheorem{corollary}[theorem]{Corollary}

\theoremstyle{definition}

\newtheorem{definition}[theorem]{Definition}

\theoremstyle{remark}

\newtheorem{example}[theorem]{Example}
\newtheorem{conjecture}[theorem]{Conjecture}

\usepackage{natbib}
 
\def\tsc#1{\csdef{#1}{\textsc{\lowercase{#1}}\xspace}}
\tsc{WGM}
\tsc{QE}

\title{On Optimal Homogeneous-Metric Codes}

\begin{document}
\author[A. Pyka]{Andreas Pyka$^1$}
\author[V. Weger]{Violetta Weger$^1$}

\address{$^1$Technical University of Munich\\
        Germany 
}

\email{andreas.pyka@tum.de}
\email{violetta.weger@tum.de}

\maketitle

\begin{abstract}
The homogeneous metric can be viewed as a natural extension of the Hamming metric to finite chain rings. It distinguishes between three types of elements: zero, non-zero elements in the socle, and elements outside the socle. Since the Singleton bound is one of the most fundamental and widely studied bounds in classical coding theory, we investigate its analogue for codes over finite chain rings equipped with the homogeneous metric. We provide a complete characterization of Maximum Homogeneous Distance (MHD) codes, showing that MHD codes coincide with lifted MDS codes and are contained within the socle at low rank. Exceptions arise from exceptional MDS codes or single-parity-check codes. We then shift our focus to the Plotkin-type bound in the homogeneous metric and close an existing gap in the theory of constant homogeneous-weight codes by identifying those of minimal length. 
\end{abstract}

\maketitle

\section{Introduction}
The study of codes that are optimal with respect to classical bounds is a central theme in coding theory. 
In particular, codes attaining the Singleton bound, known as Maximum Distance Separable (MDS) codes, 
have been extensively investigated in the classical setting where codes are defined over finite fields 
endowed with the Hamming metric. In this case, several structural properties are well understood: 
MDS codes exist if $n\le q$, they are sparse for $n$ growing and  dense as $q$ grows, and their duals are again MDS codes. 
At the same time, important open problems remain, most notably the MDS conjecture.

Over the past decades it has become increasingly clear that many interesting code families 
are more naturally described over rings rather than fields. 
A striking example is provided by the Kerdock and Preparata codes, which are nonlinear over 
$\mathbb{F}_2$ but become linear over $\mathbb{Z}_4$ when endowed with the Lee weight via 
the Gray isometry, as studied in \cite{nechaev} and \cite{hammons}. 
This observation initiated the systematic study of ring-linear codes and revealed that 
suitable weight functions are required in order to capture the algebraic structure of the 
underlying ring. 
Among these, the homogeneous weight introduced by \cite{constantinescu} 
and later generalized by \cite{nechaevhonold} plays a prominent role, 
in particular for codes over finite Frobenius and chain rings.

A natural question is how classical bounds and the structure of optimal codes behave when 
one moves from the Hamming metric over fields to other metrics over rings. 
For example, in the Lee metric it was recently shown in \cite{LeePaper} that the situation 
differs drastically from the classical case: linear maximum Lee distance codes are extremely 
rare and, apart from a one-dimensional example over $\mathbb{Z}_5$, essentially do not exist.

In this paper we investigate the analogous questions for the homogeneous weight on finite 
chain rings. 
We first revisit Singleton-type bounds in this setting and introduce the class of 
\emph{Maximum Homogeneous Distance} (MHD) codes, i.e., codes attaining the homogeneous 
Singleton bound. 
Our first main contribution is a characterization of these codes, which allows us to 
analyze their density and compare their behavior to the classical case.

We then turn to Plotkin-type bounds for the homogeneous weight. 
Optimal codes for these bounds are constant-weight codes whose structure was largely 
characterized in \cite{wood}. 
In particular, Wood showed that every such code arises as a replication of a single 
shortest-length code. 
However, the construction of this minimal-length code was missing in the literature. 
We close this gap by explicitly constructing the smallest constant-weight homogeneous 
codes and thereby complete Wood's characterization. 
 
The paper is structured as follows: In Section \ref{sec:prelim} we give all the basic properties of finite chain rings, their linear codes and the homogeneous weight, which we require in the following sections. In Section \ref{sec:singleton} we state the generalized version of the Singleton bound, tailored to the homogeneous weight and  provide a characterization of optimal codes for this bound.  
In Section \ref{sec:plotkin} we provide an elementary proof of the  Plotkin bound in the homogeneous metric, derived from a classical proof over finite fields and thereby reobtain a chain ring specific case of a result by \cite{greferath}. We proceed by analyzing constant weight codes, building on the result from \cite{wood} that such codes are replications of a single shortest-length code. Finally, in Section \ref{sec:asymptotic} we take a look at asymptotic versions of both bounds and compare them to their classical counterparts. Part of these results first appeared in the bachelor's thesis of the first author~\cite{bachelor}.

\section{Preliminaries}\label{sec:prelim}

In this section we introduce the notation used throughout the paper and cover the basics of coding theory over finite chain rings. 

We denote by $\mathcal{R}$ a finite chain ring with maximal ideal generated by $\gamma$, let $s$ be its nilpotency index and $q=p^r$ be the size of the residue field, for a positive integer $r$. We further denote by $\mathcal{R}^\times$ its units. For $a \in \mathcal{R}$ we denote by $\langle a \rangle =a \mathcal{R}$ the (left)-ideal generated by $a.$  Since the notions of left and right chain rings coincide in the finite case, we will only speak of finite chain rings from now on.
We denote the identity matrix of size $n \times n$ by $I_n$ and for any matrix $A \in \mathcal{R}^{k \times n}$, we denote by $A^\top$ its transpose.

\begin{definition}
Let $x \in \mathcal{R}$. The largest positive integer $h$ with $h \leq s$ and $x \in \langle \gamma^h\rangle$ is called \textit{height} of $x$ and denoted by $\mathfrak{h}(x)= h$.
\end{definition}
Clearly, any element $x \in \mathcal{R}$ can be written  as $x=u \gamma^h$, where $h = \mathfrak{h}(x)$ and $u \in \mathcal{R}^\times$. 
While the unit $u$ is not unique, we have that $x=u \gamma^h=v \gamma^h$ for $u,v \in \mathcal{R}^\times$ if and only if $u-v \in \langle \gamma^{s-h} \rangle.$
We also recall that if $x,x' \in \mathcal{R}$ have heights $h,h'$ respectively, then $x+x'$ has height $\min\{h,h'\}.$
The socle of $\mathcal{R}$ is given by $\langle \gamma^{s-1} \rangle.$
Finally, we note that for all $i \in \{0,\ldots, s\}$, the ideal $\langle \gamma^i \rangle$ is isomorphic to $\mathcal{R}/\langle \gamma^{s-i}\rangle.$

\begin{definition}
An  $\mathcal{R}$\textit{-linear code} $\mathcal{C}$ of length $n$ is a (left-) $\mathcal{R}$-submodule of the free left module $\mathcal{R}^n$.
\end{definition}
In the remainder of the paper we will not differentiate between left- or right- modules, as 
 all of the following results can be  adapted to right $\mathcal{R}$-submodules of $\mathcal{R}^n$. The elements of $\mathcal{C}$ are called \textit{codewords}.

The fundamental theorem of finite abelian groups implies that any $\mathcal{R}$-module  is isomorphic to the following direct sum of $\mathcal{R}$-modules
\begin{align}
    \mathcal{C} \cong (\mathcal{R} / \gamma^s\mathcal{R})^{k_0} \times (\mathcal{R} / \gamma^{s-1}\mathcal{R})^{k_1} \times \cdots \times (\mathcal{R} /\gamma\mathcal{R})^{k_{s-1}} .
\end{align}
The unique $s$-tuple $(k_0, k_1, \ldots, k_{s-1})$ is called the \textit{subtype} of $\mathcal{C}$ and $k_0$ is called its \textit{free rank}.

Since $\mathcal{R}$ is finite, every code is finitely generated. The cardinality of a minimal  generating set  $K$ is called the \textit{rank} of $\mathcal{C}$  and can be computed as $K = \sum_{i = 0}^{s-1}k_i$. The $\mathcal{R}$-dimension of $\mathcal{C}$ is given by $k= \log_{|\mathcal{R}|} |\mathcal{C}| = \sum_{i=0}^{s-1} \frac{s-i}{s} k_i$. This allows for a matrix representation of the code.

\begin{definition}
   Let $\mathcal{C} \subseteq \mathcal{R}^n$ be a linear code of free rank $k_0$ and rank $K$. A matrix $G \in \mathcal{R}^{K \times n}$ is called a \textit{generator matrix} of $\mathcal{C}$ if the rows of $G$ span the code. A matrix $H\in \mathcal{R}^{(n-k_0)\times n}$  is called a \textit{parity-check matrix} of $\mathcal{C}$ if it has $\mathcal{C}$ as kernel.
\end{definition}
Usually, it is convenient to consider a generator and parity-check matrix in their standard form, as defined in \cite{norton}.

\begin{proposition}\label{sysform}
Let $\mathcal{C}$ be an $\mathcal{R}$-linear code of length $n$  and subtype $(k_0, \ldots, k_{s-1})$. Then $\mathcal{C}$ has (up to permutation of columns) a generator matrix in the following standard form 
\begin{align}
G = 
\begin{bmatrix} I_{k_0}&A_{0,1} &A_{0,2} &A_{0,3}&\dots& A_{0,s-1}& A_{0,s} \\
            0 &\gamma I_{k_1} & \gamma A_{1,2}& \gamma A_{1,3}& \dots& \gamma A_{1,s-1}&\gamma A_{1,s}\\
            0 &0 & \gamma^2 I_{k_2} & \gamma^2 A_{2,3}& \dots& \gamma^2 A_{2,s-1}&\gamma^2 A_{2,s}\\
            \vdots &  \vdots& \vdots & \vdots && \vdots &\vdots \\
            0 &0 & 0 & 0 & \dots & \gamma^{s-1}I_{k_{s-1}}& \gamma^{s-1} A_{s-1,s}\\
            \end{bmatrix}\ ,
\end{align}
where $A_{i,s}\in (\mathcal{R} / \gamma^{s-i} \mathcal{R})^{k_{i}\times (n-K)}$ and $A_{i,j}\in (\mathcal{R} / \gamma^{s-i} \mathcal{R})^{k_i\times k_j}$ for $j< s$. Moreover $ \mathcal{C} $ has (up to permutation of columns) a parity-check matrix in the following standard form  
        \begin{align} 
            H=\begin{bmatrix}
        B_{0,0}&B_{0,1}  &\dots& B_{0,s-1}& I_{n-K} \\
        \gamma B_{1,0}&\gamma B_{1,1} &\dots &\gamma I_{k_{s-1}}&0 \\
        \vdots &  \vdots&  & \vdots & \vdots \\
        \gamma^{s-1} B_{s-1,0} &\gamma^{s-1}I_{k_{1}}&  \dots& 0&0\\
        \end{bmatrix},
    \end{align} 
    where $B_{0,j} \in  (\mathcal{R} / \gamma^{s} \mathcal{R})^{(n-K)\times k_{j}}, B_{i,j} \in (\mathcal{R} / \gamma^{s-i} \mathcal{R})^{k_{s-i}\times k_{j}}, $ for $i>1.$ 
\end{proposition}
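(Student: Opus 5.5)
The plan is to prove the generator-matrix form by a Gaussian-elimination argument over the chain ring, organised by height, and then to read off the parity-check form from it.

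\textbf{Generator matrix.} Start with any generator matrix of $\mathcal{C}$ and proceed inductively on the height $i=0,1,\ldots,s-1$.

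At stage $i$, consider the rows not yet processed. Each entry is $x=u\gamma^{\mathfrak{h}(x)}$, and since $\gamma$ generates the unique maximal ideal, the ideals of $\mathcal{R}$ are totally ordered. So among these rows we can select an entry of minimal height $i$.

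We then perform the following operations:
\begin{enumerate}
\item Permute columns to bring that entry to the pivot position.
\item Scale the row by the inverse unit, so the pivot becomes $\gamma^i$.
\item Clear the rest of the pivot column. Every other entry in the unprocessed rows has height at least $i$, so it equals $r\gamma^i$ for some $r\in\mathcal{R}$, and subtracting $r$ times the pivot row kills it. The entries above, in rows of lower height, are cleared the same way, since they too lie in $\langle\gamma^i\rangle$ after the earlier reductions.
\end{enumerate}
Repeat at height $i$ until no unprocessed entry of height $i$ remains, then pass to $i+1$. Discard rows that become zero.

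This yields the block staircase displayed in the statement. Every entry of the $i$-th block row lies in $\langle\gamma^i\rangle$, so it can be written as $\gamma^i A_{i,j}$. The identification $\langle\gamma^i\rangle\cong\mathcal{R}/\langle\gamma^{s-i}\rangle$ recalled above shows that $A_{i,j}$ is only defined modulo $\gamma^{s-i}$.

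\textbf{Block sizes.} Next we show that the block sizes equal the subtype. From the echelon shape, $\mathcal{C}$ decomposes as a direct sum of cyclic modules $\mathcal{R}g$, where the rows $g$ of block $i$ satisfy $\mathcal{R}g\cong\mathcal{R}/\gamma^{s-i}\mathcal{R}$. This is because the pivot coordinate $\gamma^i$ has annihilator exactly $\langle\gamma^{s-i}\rangle$, and the pivot columns make the sum direct: a relation $\sum r_g g=0$ restricted to the pivot columns, read from top to bottom, forces each $r_g\gamma^{i}=0$. Uniqueness of the subtype (Krull--Schmidt, or equivalently the counts $|\gamma^j\mathcal{C}|$ for all $j$) then gives that there are $k_i$ rows in block $i$ and $K$ rows in total.

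\textbf{Parity-check matrix.} Write the candidate $H$ with the block structure shown and verify $GH^\top=0$ blockwise. This amounts to solving for the $B_{i,j}$: the blocks with identity entries let us choose the $B$'s as explicit polynomial expressions in the $A$'s, by back-substitution, mirroring the field case $H=[-A^\top\mid I]$.

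We must then show that $\ker H=\mathcal{C}$. The inclusion $\mathcal{C}\subseteq\ker H$ holds by construction. For equality, compare cardinalities: $|\ker H|=|\mathcal{C}|$, which follows by computing $|\ker H|$ from the staircase form of $H$ with steps $\gamma^j I$. Alternatively one can note that $H$ generates $\mathcal{C}^\perp$, which has the reversed subtype $(n-K,k_{s-1},\ldots,k_1)$, and use $|\mathcal{C}||\mathcal{C}^\perp|=|\mathcal{R}|^n$ for Frobenius rings.

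\textbf{Main obstacle.} I expect the hard part to be the bookkeeping in the parity-check step: the height-$j$ rows of $H$ must annihilate each height-$i$ row of $G$ only modulo $\gamma^{s}$, with $\gamma^{i}\gamma^{j}$ vanishing once $i+j\ge s$, so that the well-definedness of the $B_{i,j}$ modulo $\gamma^{s-i}$ comes out consistently. I would handle this by writing $G$ and $H$ in block form and checking the identity block by block, inducting on $i$.
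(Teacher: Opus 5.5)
\emph{Verdict.} The paper does not prove this proposition; it takes it from \cite{norton}. Your outline is the standard argument given there: height-ordered elimination, matching block sizes to the subtype through the direct-sum decomposition, back-substitution for the $B_{i,j}$, and a cardinality comparison to get $\ker H=\mathcal{C}$. It is correct apart from one false sentence, and that sentence concerns a step you do not need.

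\emph{The false sentence.} In step 3 you claim that the entries above a new pivot $\gamma^i$, in rows of \emph{lower} height, lie in $\langle\gamma^i\rangle$ after the earlier reductions and can be cleared. This already fails for $\langle(1,1),(0,2)\rangle\subseteq(\mathbb{Z}/4\mathbb{Z})^2$: since $(1,1)-r(0,2)=(1,1-2r)$, the entry $1$ above the pivot $2$ can never be removed. Those entries are exactly the blocks $A_{j,i}$ with $j<i$, which the standard form leaves arbitrary, so simply drop the claim.

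Clearing above the pivot is needed only inside the height-$i$ block. Without it the diagonal block would be $\gamma^i$ times a unitriangular matrix rather than $\gamma^i I_{k_i}$. There the clearing is legitimate, for two reasons:
\begin{enumerate}
\item the earlier height-$i$ pivot rows have all their entries in $\langle\gamma^i\rangle$;
\item the current pivot row was still unprocessed when those pivots were handled, so it has $0$ in all earlier pivot columns, and subtracting multiples of it disturbs nothing.
\end{enumerate}

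\emph{The parity-check step.} This step, which you only sketch and flag as the main obstacle, does go through. Index the column blocks by $0,\dots,s$, so that row block $\ell$ of $H$ carries its identity in column block $s-\ell$. Take row block $i$ of $G$ against row block $\ell$ of $H$ (for commutative $\mathcal{R}$).
\begin{itemize}
\item If $i+\ell\ge s$, the product vanishes automatically.
\item Otherwise it equals $\gamma^{i+\ell}\bigl(B_{\ell,i}^\top+\sum_{m=i+1}^{s-\ell-1}A_{i,m}B_{\ell,m}^\top+A_{i,s-\ell}\bigr)$. You make it zero by defining $B_{\ell,i}$ recursively downward from $i=s-\ell-1$.
\end{itemize}
Changing representatives of $A_{i,m}$ modulo $\gamma^{s-i}$, or of $B_{\ell,m}$ modulo $\gamma^{s-\ell}$, only alters the bracket by multiples of $\gamma^{s-i-\ell}$. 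So the well-definedness issue you worried about does not arise.

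For the count, solve $Hx^\top=0$ block by block:
\begin{itemize}
\item $x_0$ is free;
\item for $\ell\ge 1$, the $\ell$-th block row fixes $x_{s-\ell}$ modulo $\gamma^{s-\ell}$ in terms of $x_0,\dots,x_{s-\ell-1}$;
\item the top block row then determines $x_s$.
\end{itemize}
This gives $|\ker H|=\prod_{m=0}^{s-1}q^{(s-m)k_m}=|\mathcal{C}|$, exactly as you intended.
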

 
Let us consider the standard inner product, $\langle x,y \rangle=\sum_{i=1}^n x_iy_i$, which gives raise to the dual code. 
\begin{definition}
    Let $\mathcal{C} \subseteq \mathcal{R}^n$ be a linear code of subtype $(k_0, \ldots, k_{s-1})$ and rank $K.$ The dual code
    $\mathcal{C}^\perp = \{x \in \mathcal{R}^n \mid \langle x,c \rangle=0 \  \forall c \in \mathcal{C}\}$ 
    has subtype $(n-K, k_{s-1}, \ldots, k_1)$ and rank $n-k_0.$
\end{definition}

As in classical coding theory, we have that if $\mathcal{C}= \text{ker}(H^\top)$, then $\mathcal{C}^\perp= \langle H \rangle.$
The socle of a code $\mathcal{C}$ is given by the codewords of $\mathcal{C}$ which lie in $\langle \gamma^{s-1} \rangle$, that is $\mathcal{C}_0 = \mathcal{C} \cap \langle \gamma^{s-1} \rangle^n.$
The socle of a code connects codes over finite chain rings to codes over finite fields. This opens the door to apply results from  classical coding theory.

\begin{proposition}\label{socleisavs}
The socle $\mathcal{C}_0$ of a linear code $\mathcal{C} \subseteq \mathcal{R}^n$ is isomorphic to $\overline{\mathcal{C}} \subseteq (\mathcal{R}/\langle \gamma \rangle)^n = \mathbb{F}_{p^r}^n.$ 
\end{proposition}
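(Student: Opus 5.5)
The plan is to realize the isomorphism explicitly through the identification $\langle \gamma^{s-1}\rangle \cong \mathcal{R}/\langle\gamma\rangle$ recalled in the preliminaries (the case $i=s-1$). Let $\varphi:\mathcal{R}/\langle \gamma\rangle \to \langle\gamma^{s-1}\rangle$ be given by $\bar a \mapsto \gamma^{s-1}a$. It is well defined, since if $a-a'\in\langle\gamma\rangle$ then $\gamma^{s-1}(a-a')\in\langle\gamma^s\rangle=0$. It is surjective by construction. It is injective because $\gamma^{s-1}a=0$ forces $a\in\langle\gamma\rangle$: otherwise $a$ is a unit and $\gamma^{s-1}\neq 0$. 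Applying $\varphi$ componentwise gives an isomorphism $\Phi:\mathbb{F}_{p^r}^n\to\langle\gamma^{s-1}\rangle^n$ of additive groups, compatible with the $\mathcal{R}$-action, where $\mathcal{R}$ acts on $\mathbb{F}_{p^r}$ through reduction.

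I then define $\overline{\mathcal{C}} := \Phi^{-1}(\mathcal{C}_0)$. Since $\mathcal{C}_0=\mathcal{C}\cap\langle\gamma^{s-1}\rangle^n$ is an intersection of two submodules, it is a submodule. Its preimage is therefore closed under addition and under multiplication by $\mathcal{R}/\langle\gamma\rangle$; the latter holds because $\gamma$ annihilates $\langle\gamma^{s-1}\rangle^n$, so the $\mathcal{R}$-action on $\mathcal{C}_0$ factors through the residue field. Hence $\overline{\mathcal{C}}$ is an $\mathbb{F}_{p^r}$-linear subspace, and $\Phi$ restricts to the desired isomorphism $\overline{\mathcal{C}}\cong\mathcal{C}_0$.

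For a more concrete description, I would also identify $\overline{\mathcal{C}}$ via the standard form of Proposition~\ref{sysform}. I expect $\mathcal{C}_0$ to be generated by $\gamma^{s-1-i}$ times the $i$-th block row of $G$, for $i=0,\dots,s-1$. This would make $\overline{\mathcal{C}}$ the subspace spanned by the rows of the reductions $[\,0\ \cdots\ I_{k_i}\ \bar A_{i,i+1}\cdots \bar A_{i,s}]$, of dimension $K$. The hard step here is showing that these elements span all of $\mathcal{C}_0$. To do so, I would write an arbitrary codeword as $xG$ and impose $\gamma xG=0$. The identity blocks of the staircase then force $x_i\in\langle\gamma^{s-1-i}\rangle$ modulo the annihilator of $\gamma^i$, working block column by block column from the left. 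The bare isomorphism statement itself, however, only needs the first two paragraphs.
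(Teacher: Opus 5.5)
Your argument is correct and follows the paper's own route. The paper simply invokes the componentwise identification of $\langle\gamma^{s-1}\rangle$ with $\mathcal{R}/\langle\gamma\rangle=\mathbb{F}_{p^r}$. Its text misprints this as an isomorphism from $\mathcal{R}/\langle\gamma^{s-1}\rangle$, but your map $\bar a\mapsto\gamma^{s-1}a$ is the one actually needed, and defining $\overline{\mathcal{C}}$ as the preimage of $\mathcal{C}_0$ matches how the paper later uses it. Your third paragraph goes further by sketching, via the standard form, why $\overline{\mathcal{C}}$ has dimension $K$, which the paper only asserts as an immediate consequence.
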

This directly follows form the isomorphism from $\mathcal{R}/\langle \gamma^{s-1} \rangle \to  \mathbb{F}_{p^r}$ and as an immediate consequence we get that the socle $\mathcal{C}_0$ has $\mathcal{R}/\langle \gamma \rangle$-dimension $K.$

A classical code $\mathcal{C} \subseteq \mathbb{F}_q^n$ is said to be trivial if it has dimension $k \in \{0,1,n-1,n\}.$
We say that a ring-linear code $\mathcal{C} \subseteq \mathcal{R}^n$ is trivial, if its $\mathcal{R}$-dimension $k\in \{0,n\}$, which corresponds to the codes $\{0\}, \mathcal{R}^n.$ 
In fact, unlike in the classical case, we can have codes of $\mathcal{R}$-dimension 1, which do not behave trivially. For example $\mathcal{C}= \langle (2,0,0),(0,2,2) \rangle \subseteq \mathbb{Z}/4\mathbb{Z}^3$. The $\mathbb{Z}/4\mathbb{Z}$-dimension of $\mathcal{C}$ is 1, but its rank is $K=2.$

In this paper our main focus is on the homogeneous weight, however, we also require  the Hamming weight.  

\begin{definition}
For  tuples $x,y \in \mathcal{R}^n$ the \textit{Hamming weight} is
\begin{align}
\text{wt}_H(x) = |\{i \in \{1,\dots,n\} \mid  x_i \neq 0\}|.
\end{align}
This also induces the \textit{Hamming distance} between $x$ and $y$ as $d_H(x,y)=\text{wt}_H(x-y).$
If $\mathcal{C} \subseteq \mathcal{R}^n$ is a linear code, we define the \textit{minimum Hamming distance} of $\mathcal{C}$
\begin{align}
d_H(\mathcal{C}) = \min \{ \text{wt}_H(c) \mid c \in \mathcal{C} \setminus \{0\}\}.
\end{align}
\end{definition}

We will only consider the normalized homogeneous weight, but want to remark that all results carry over to the general definition of homogeneous weight.

\begin{definition}
For an element  $x \in \mathcal{R}$  the \textit{homogeneous weight} is
\begin{align} \text{wt}_{\text{Hom}}(x) = 
\begin{cases}
\frac{q}{q-1}, &\text{if } x \in \langle \gamma^{s-1} \rangle\setminus \{0\},\\
1, &\text{if } x \notin \langle \gamma^{s-1} \rangle,\\
0, &\text{if } x = 0.
\end{cases}
\end{align}
For a tuple $x \in \mathcal{R}^n$ we define its { homogeneous weight} as the sum of the homogeneous weight of its coordinates
\begin{align} \text{wt}_{\text{Hom}}(x) = \sum_{i=1}^n \text{wt}_{\text{Hom}}(x_i).
\end{align}
For $x,y \in \mathcal{R}^n$ we define the \textit{homogeneous distance} between $x$ and $y$ is given by $d_{\text{Hom}}(x,y)= \text{wt}_{\text{Hom}}(x-y).$
Finally, for a linear code $\mathcal{C} \subseteq \mathcal{R}^n$, we define \textit{the minimum homogeneous distance} of $\mathcal{C}$ as
\begin{align}
d_{\text{Hom}}(\mathcal{C}) = \min \{ \text{wt}_{\text{Hom}}(c) \mid c \in \mathcal{C} \setminus \{0\} \}.
\end{align}
\end{definition}

For $\mathcal{C} =\{0\}$,  in order to be in line with classical coding theory, we set $d_{\text{Hom}}(\{0\})= \frac{q}{q-1}(n+1)$.

If the finite chain ring $\mathcal{R}$ is fixed, we will often write $M$ for the maximal homogeneous weight of elements of $\mathcal{R}$, i.e., $M=\frac{q}{q-1}$.
Clearly, we have the following connection between the homogeneous and the Hamming weight:
for any $x \in \mathcal{R}^n$, we have 
$$\text{wt}_H(x) \leq \text{wt}_{\text{Hom}}(x) \leq M \text{wt}_H(x)$$ and for $x \in \langle \gamma^{s-1} \rangle^n$ we have that 
$$\text{wt}_{\text{Hom}}(x)=M \text{wt}_H(x).$$

The name of the homogeneous weight, comes from the following fact:
\begin{lemma}\label{avghom}
For any non-zero ideal $I \subseteq \mathcal{R}$ it holds
\begin{align}
|I| = \sum_{x \in I} \text{wt}_{\text{Hom}}(x).
\end{align}
\end{lemma}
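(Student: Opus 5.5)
Since $\mathcal{R}$ is a finite chain ring, every non-zero ideal is of the form $I=\langle \gamma^i\rangle$ for some $i\in\{0,\ldots,s-1\}$, and it contains the socle $\langle\gamma^{s-1}\rangle$. The plan is to split the sum over $I$ according to the three cases in the definition of the homogeneous weight, count each part, and check that the total equals $|I|$.

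First we record the cardinalities. The residue field $\mathcal{R}/\langle\gamma\rangle$ has $q$ elements, and each quotient $\langle\gamma^j\rangle/\langle\gamma^{j+1}\rangle$ is a one-dimensional vector space over it. Hence $|\langle\gamma^i\rangle| = q^{s-i}$; this also follows from $\langle\gamma^i\rangle\cong\mathcal{R}/\langle\gamma^{s-i}\rangle$. In particular the socle has exactly $q$ elements, so $I$ contains $q-1$ non-zero socle elements and $q^{s-i}-q$ elements outside the socle.

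Next we split the sum:
\begin{align}
\sum_{x\in I}\text{wt}_{\text{Hom}}(x) = (q-1)\cdot\frac{q}{q-1} + \left(q^{s-i}-q\right)\cdot 1 = q + q^{s-i} - q = q^{s-i} = |I|.
\end{align}
This handles every $i\le s-1$ uniformly. The case $I$ equal to the socle ($i=s-1$) is included: there the second term vanishes, and the sum is $q=|I|$.

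There is no real obstacle. The only point requiring care is the claim $|\langle\gamma^i\rangle|=q^{s-i}$. We would justify it from the chain structure: multiplication by $\gamma^j$ induces a surjection from $\mathcal{R}/\langle\gamma\rangle$ onto $\langle\gamma^j\rangle/\langle\gamma^{j+1}\rangle$. This map is also injective, because $\gamma^j u\in\langle\gamma^{j+1}\rangle$ with $u$ a unit would contradict $\mathfrak{h}(\gamma^j)=j<s$. Multiplying the layer sizes then gives the count.
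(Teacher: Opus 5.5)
Your proof is correct. The paper states this lemma without proof, as the normalization built into the homogeneous weight. Your direct count of $q-1$ non-zero socle elements of weight $\frac{q}{q-1}$ plus $q^{s-i}-q$ non-socle elements of weight $1$, with $|\langle\gamma^i\rangle|=q^{s-i}$ obtained layer by layer, is exactly the natural verification. Drop the aside that this cardinality ``also follows'' from $\langle\gamma^i\rangle\cong\mathcal{R}/\langle\gamma^{s-i}\rangle$: that isomorphism alone only gives $|\langle\gamma^i\rangle|\cdot|\langle\gamma^{s-i}\rangle|=|\mathcal{R}|$, so your filtration argument is the actual justification.
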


For a linear code $\mathcal{C} \subseteq \mathbb{F}_q^n$ of dimension $k$, the classical Singleton bound states that 
\begin{align}
    d_H(\mathcal{C}) \leq n-k+1
\end{align}
and optimal codes with respect to this bound, i.e., codes which attain equality, are called Maximum Distance Separable (MDS). 

\cite{shiromoto} and \cite{samei} generalized this bound to more general weight functions on finite rings.
These weight functions encompass the homogeneous weight, the Lee weight and the Hamming weight. 
In our setting, their bounds state the following. 
\begin{theorem}\label{thm:homsbold}
    Let $\mathcal{C} \subseteq \mathcal{R}^n$ be a linear code of rank $K$. Then 
    $$ \left\lfloor \frac{d_{\text{Hom}}(\mathcal{C})-1 }{M} \right\rfloor \leq n-K.$$
\end{theorem}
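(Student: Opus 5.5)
We reduce the statement to the classical Singleton bound for the socle. By Proposition~\ref{socleisavs}, the socle $\mathcal{C}_0 = \mathcal{C}\cap\langle\gamma^{s-1}\rangle^n$ is isomorphic to a linear code $\overline{\mathcal{C}}\subseteq\mathbb{F}_q^n$ of $\mathbb{F}_q$-dimension $K$. This identification goes through the $\mathcal{R}$-module isomorphism $\langle\gamma^{s-1}\rangle \cong \mathcal{R}/\langle\gamma\rangle$, applied coordinatewise. It therefore preserves supports, and hence Hamming weights. So $d_H(\mathcal{C}_0)=d_H(\overline{\mathcal{C}})$.

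If $K=0$ then $\mathcal{C}=\{0\}$. In that case $d_{\text{Hom}}=M(n+1)$, so $\lfloor (M(n+1)-1)/M\rfloor = n$, since $M\le 2$ gives $n \le n+1-1/M < n+1$. The inequality holds with equality. So assume $K\ge 1$.

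By the classical Singleton bound, $\overline{\mathcal{C}}$ contains a nonzero word of Hamming weight at most $n-K+1$. Let $c\in\mathcal{C}_0\setminus\{0\}$ be the corresponding word. Every nonzero entry of $c$ lies in $\langle\gamma^{s-1}\rangle\setminus\{0\}$, so by the remark after the definition of the homogeneous weight,
\[
d_{\text{Hom}}(\mathcal{C}) \le \text{wt}_{\text{Hom}}(c) = M\,\text{wt}_H(c) \le M(n-K+1).
\]
Hence
\[
\frac{d_{\text{Hom}}(\mathcal{C})-1}{M} \le n-K+1-\frac{1}{M} < n-K+1.
\]
Since $n-K+1$ is an integer and the left side is strictly smaller, its floor is at most $n-K$, which is the claim.

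The one point to check carefully is that the isomorphism in Proposition~\ref{socleisavs} is coordinatewise. The socle of each coordinate is a one-dimensional $\mathbb{F}_q$-space, $\gamma^{s-1}u \mapsto \bar u$ is well defined, and its kernel is trivial, so supports are preserved. It also matters that the socle of $\mathcal{C}$ has $\mathbb{F}_q$-dimension exactly $K$, the rank, rather than $k_0$. This follows from the standard form in Proposition~\ref{sysform}: multiplying the $i$-th block of rows by $\gamma^{s-1-i}$ gives $K$ socle generators. These generators are independent over $\mathbb{F}_q$ because of the identity blocks.
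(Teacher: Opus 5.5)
Your proof is correct and is essentially the paper's argument. The paper quotes this bound from Shiromoto and Samei without proof, but proves the stronger estimate $d_{\text{Hom}}(\mathcal{C}) \leq M(n-K+1)$ in Theorem~\ref{singleton2} by exactly your reduction: pass to the socle $\mathcal{C}_0$, use $\text{wt}_{\text{Hom}} = M\,\text{wt}_H$ there, and apply the classical Singleton bound to the $K$-dimensional code $\overline{\mathcal{C}}$. Your floor inequality then follows in one line, as you show. Your separate treatment of $K=0$ and the standard-form check that the socle has $\mathbb{F}_q$-dimension $K$ are fine. The only nit is that $n \leq n+1-1/M$ follows from $M \geq 1$, not from $M \leq 2$, and only the strict bound $n+1-1/M < n+1$ is actually needed.
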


We give a slightly improved version of the bound. To do so, we introduce an alternative floor function:

\begin{definition}
Set \(\left\lfloor \cdot \right\rfloor_h: \mathbb R \rightarrow \mathbb Z\) with
\[
\left\lfloor x \right\rfloor_h := 
\begin{cases} 
x - 1, & \text{for \(x \in \mathbb Z\),} \\
\left\lfloor x \right\rfloor, & \text{for \(x \notin \mathbb Z\).}
\end{cases}
\]
\end{definition}
 Clearly, $\left\lfloor x \right\rfloor_h \leq \left\lfloor x \right\rfloor.$

\begin{theorem}\label{singleton2}
Let  $\mathcal{C} \subseteq \mathcal{R}^n$ be a linear code of rank $K.$
Then 
\begin{align} 
d_{\text{Hom}}(\mathcal{C}) \leq M(n-K+1) \end{align} or equivalently 
\begin{align} \left\lfloor \frac{d_{\text{Hom}}(\mathcal{C})}{M} \right\rfloor_h \leq n - K.\end{align}
\end{theorem}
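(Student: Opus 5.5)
Our plan is to reduce to the socle, where the homogeneous weight is simply $M$ times the Hamming weight, and then apply the classical Singleton bound over the residue field. By Proposition~\ref{socleisavs}, the socle $\mathcal{C}_0 = \mathcal{C}\cap\langle\gamma^{s-1}\rangle^n$ is isomorphic to a linear code $\overline{\mathcal{C}}\subseteq\mathbb{F}_q^n$ of dimension $K$. The isomorphism is induced coordinatewise by $\langle\gamma^{s-1}\rangle\cong\mathcal{R}/\langle\gamma\rangle$, so it sends nonzero entries to nonzero entries and therefore preserves Hamming weight. The classical Singleton bound for $\overline{\mathcal{C}}$ then yields a nonzero $c\in\mathcal{C}_0$ with $\mathrm{wt}_H(c)\le n-K+1$.

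Since $c\in\langle\gamma^{s-1}\rangle^n$, the identity $\mathrm{wt}_{\text{Hom}}(c)=M\,\mathrm{wt}_H(c)$ gives
\[ d_{\text{Hom}}(\mathcal{C})\le \mathrm{wt}_{\text{Hom}}(c)= M\,\mathrm{wt}_H(c)\le M(n-K+1). \]
If $K=0$, i.e.\ $\mathcal{C}=\{0\}$, the convention $d_{\text{Hom}}(\{0\})=M(n+1)$ gives equality directly. For $K\ge1$ the socle is nonzero, because every nonzero submodule of $\mathcal{R}^n$ meets $\langle\gamma^{s-1}\rangle^n$ nontrivially: multiply a nonzero element of height $h$ by $\gamma^{s-1-h}$.

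It remains to show that $d\le M(n-K+1)$ is equivalent to $\lfloor d/M\rfloor_h\le n-K$, where $d=d_{\text{Hom}}(\mathcal{C})$. Put $x=d/M$ and $m=n-K$. If $x\in\mathbb{Z}$, then $\lfloor x\rfloor_h=x-1\le m$ is exactly $x\le m+1$. If $x\notin\mathbb{Z}$, then $\lfloor x\rfloor\le m$ is equivalent to $x<m+1$, which for non-integral $x$ is the same as $x\le m+1$. Hence the two formulations agree.

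There is no real obstacle here. The only points needing care are that Hamming weight is preserved under the socle isomorphism, and the degenerate case $\mathcal{C}=\{0\}$, which is covered by the convention.
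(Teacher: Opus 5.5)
Your proof is correct and follows the paper's argument: restrict to the socle $\mathcal{C}_0$, where $\text{wt}_{\text{Hom}} = M\,\text{wt}_H$, and apply the classical Singleton bound to the $K$-dimensional code $\overline{\mathcal{C}} \subseteq \mathbb{F}_q^n$. You also handle the convention for $\mathcal{C}=\{0\}$ and check that the two formulations of the bound are equivalent, details the paper leaves implicit.
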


\begin{proof}
Since $\mathcal{C}_0 \subseteq \mathcal{C}$, it holds that 
$  d_{\text{Hom}}(\mathcal{C}) \leq d_{\text{Hom}}(\mathcal{C}_0) =Md_H(\mathcal{C}_0)$. Since 
$\mathcal{C}_0$ can be viewed as  a $K$-dimensional code $\overline{\mathcal{C}} \subseteq \mathbb{F}_q$, where $d_H(\overline{\mathcal{C}})=d_H(\mathcal{C}_0)$. Thus, we can apply the classical Singleton bound to get
\begin{align}
    d_{\text{Hom}}(\mathcal{C}) \leq d_{\text{Hom}}(\mathcal{C}_0)    \leq M(n-K+1). 
\end{align}

\end{proof}

\section{Maximum Homogeneous Distance Codes}\label{sec:singleton}

The two ways of phrasing the bound in \Cref{singleton2} do not lead to the same notion of optimality, since for the same code (e.g. \Cref{ex:tighterSing}), equality may be achieved in the second bound while the first bound stays an inequality. In the light of the following characterization, we choose the slightly less constrained option for our analysis.

\begin{definition}\label{MHDdef}
Let $\mathcal{C} \subseteq \mathcal{R}^n$ be a linear code of rank $K$ and minimum homogeneous distance $d$. We call $\mathcal{C}$ a \textit{Maximum Homogeneous Distance} (MHD) code if
\[
\left\lfloor \frac{d}{M} \right\rfloor_h = n - K
\]
or equivalently
\[
d > M(n-K).
\]
\end{definition}

We now want to take a look at a family of codes which are MHD. Our example will also justify the modification of the existing bound and show that the new versions are tighter.

\begin{example}\label{ex:tighterSing}
For $n,s$ positive integers and a prime $p \geq n$, take the code $\mathcal{C} \subseteq (\mathbb{Z}/ {p^s} \mathbb{Z})^n$ generated by
\[G = \begin{bmatrix}
1 & 0 & 1 & \dots & 1\\
0 & 1 & 2 & \dots & n-1
\end{bmatrix}.\]
We want to find its minimum homogeneous distance $d$. From the generator matrix, it is clear that for a non-zero linear combination of the rows, at most one coordinate $i>2$ can be zero, and only if the first two coordinates are non-zero. Also, one of the coordinates $i=1$ and $i=2$ always has to be non-zero, or else the codeword is zero. In total we get that $d=n-1$. With $M = \frac{p}{p-1}$ it holds
\begin{align}
d - M(n-K) &= (n-1)-\frac{p}{p-1}(n-2) = 1 + (n-2) - \frac{p}{p-1}(n-2) \\
&= 1 + \left(1-\frac{p}{p-1}\right)(n-2) = 1 - \frac{n-2}{p-1} > 0,
\end{align}
hence the code is MHD.  
Note that this code is not optimal with respect to Theorem \ref{thm:homsbold}:
\[
\left\lfloor\frac{d-1}{M}\right\rfloor = \left\lfloor\frac{n-2}{M}\right\rfloor < n - 2 =  n - K.
\]
We thus conclude that our modification led to a tighter bound.  
\end{example}

In the next step we will establish a relationship between MHD and MDS codes.

\begin{theorem}\label{mhdismds}
Let $\mathcal{C} \subseteq \mathcal{R}^n$ be an MHD code. Then $\mathcal{C}_0$ can be viewed as an MDS code $\overline{\mathcal{C}} \subseteq \mathbb{F}_q^n$.
\end{theorem}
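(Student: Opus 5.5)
The argument reuses the chain of inequalities from the proof of \Cref{singleton2} and combines it with the MHD condition to pin down $d_H(\overline{\mathcal{C}})$ exactly. Let $K$ be the rank of $\mathcal{C}$ and $d=d_{\text{Hom}}(\mathcal{C})$. By \Cref{socleisavs}, the socle $\mathcal{C}_0$ is isomorphic to a linear code $\overline{\mathcal{C}}\subseteq\mathbb{F}_q^n$ of dimension $K$. The isomorphism $\gamma^{s-1}\mathcal{R}\cong \mathcal{R}/\langle\gamma\rangle$ acts coordinatewise and sends nonzero entries to nonzero entries, so it preserves supports. Hence $d_H(\overline{\mathcal{C}})=d_H(\mathcal{C}_0)$. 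Since every entry of a codeword in $\mathcal{C}_0$ lies in the socle, we also have $d_{\text{Hom}}(\mathcal{C}_0)=M\,d_H(\mathcal{C}_0)$.

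Next we combine the two inequalities. Because $\mathcal{C}_0\subseteq\mathcal{C}$, we get
\[
M(n-K) < d \le d_{\text{Hom}}(\mathcal{C}_0) = M\, d_H(\overline{\mathcal{C}}),
\]
where the first inequality is the MHD condition of \Cref{MHDdef}. Dividing by $M>0$ gives $d_H(\overline{\mathcal{C}})>n-K$. Since $d_H(\overline{\mathcal{C}})$ is an integer, this means $d_H(\overline{\mathcal{C}})\ge n-K+1$. The classical Singleton bound for the $K$-dimensional code $\overline{\mathcal{C}}$ gives the reverse inequality, so $d_H(\overline{\mathcal{C}})=n-K+1$ and $\overline{\mathcal{C}}$ is MDS.

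Two points need some care. The first is the degenerate case $K=0$, i.e.\ $\mathcal{C}=\{0\}$. Here $\overline{\mathcal{C}}=\{0\}$, which is MDS under the convention $d=n+1$; this matches the convention $d_{\text{Hom}}(\{0\})=M(n+1)$, so we can handle it separately or let it fall under the same computation. The second, and the only genuine step, is the integrality argument that turns the strict inequality into equality. It relies on the MHD definition using the strict inequality $d>M(n-K)$, which is the equivalent form of $\lfloor d/M\rfloor_h=n-K$; so we should quickly check that equivalence. If $d/M$ is not an integer, then $\lfloor d/M\rfloor=n-K$ gives $n-K<d/M<n-K+1$. If $d/M$ is an integer, then $d/M-1=n-K$ gives $d/M=n-K+1>n-K$. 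In both cases $d>M(n-K)$.
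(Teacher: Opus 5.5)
Your proposal is correct and follows the paper's proof essentially step for step. You chain $M(n-K) < d_{\text{Hom}}(\mathcal{C}) \le d_{\text{Hom}}(\mathcal{C}_0) = M\,d_H(\mathcal{C}_0)$, deduce $d_H(\mathcal{C}_0) > n-K$, and close with the classical Singleton bound for the $K$-dimensional code $\overline{\mathcal{C}}$; your extra checks (the $K=0$ convention and the equivalence of the two forms of the MHD definition) are harmless additions.
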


\begin{proof} 
We have $d_{\text{Hom}}(\mathcal{C}_0) = M d_H(\mathcal{C}_0)$ and since $\mathcal{C}_0 \subseteq \mathcal{C}$, we get $$Md_H(\mathcal{C}_0) = d_{\text{Hom}}(\mathcal{C}_0) \geq d_{\text{Hom}}(\mathcal{C}) >M(n-K).$$
It follows that 
$d_H(\mathcal{C}_0) >n-K$. 
Since $\mathcal{C}_0$ can be identified with $\bar{\mathcal{C}} \subseteq \mathbb{F}_q^n$ of dimension $K$, we also need that $d_H(\mathcal{C}_0) \leq n-K+1$ and thus 
 $d_H(\mathcal{C}_0) = n-K+1$.
\end{proof}

We have thus established that all MHD codes are necessarily lifted MDS codes. Let us give another example of such a code.
\begin{example}
    Let us consider $\mathcal{R}= \mathbb{Z}/9\mathbb{Z}$ and the code $\mathcal{C} =\langle (1,1,2), (0,3,3) \rangle$. 
  We can easily check that $\mathcal{C}_0  =\langle (3,0,3), (0,3,3) \rangle,$  which is isomorphic to the MDS code $\overline{\mathcal{C}}= \langle(1,0,1),(0,1,1) \rangle \subseteq \mathbb{F}_3^3$. The minimum homogeneous distance of $\mathcal{C}$ is given by $d=3>M =3/2, $ hence $\mathcal{C}$ is MHD.
\end{example}

We can provide a condition on the residue field size of $\mathcal{R}$, which allows us to extend this result to an equivalence.

\begin{theorem}\label{mainchar}
Let $\mathcal{C} \subseteq \mathcal{R}^n$ be a linear code of rank $K$ with $q > n - K + 1$. Then $\mathcal{C}$ is an MHD code if and only if $\mathcal{C}_0$ is an MDS code.
\end{theorem}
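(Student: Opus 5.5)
One direction is already done: if $\mathcal{C}$ is MHD, then \Cref{mhdismds} shows that $\mathcal{C}_0$ is MDS, and this needs no hypothesis on $q$. So the plan is to prove the converse. Assume $\mathcal{C}_0$ is MDS, so $d_H(\mathcal{C}_0)=n-K+1$. We must show that every nonzero $c\in\mathcal{C}$ satisfies $\text{wt}_{\text{Hom}}(c) > M(n-K)$.

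The key step is to push an arbitrary nonzero codeword down into the socle by multiplying by a power of $\gamma$. Take $c\neq 0$ and let $h=\min_i \mathfrak{h}(c_i)<s$ be its minimal coordinate height. Set $c'=\gamma^{s-1-h}c\in\mathcal{C}$.
\begin{itemize}
\item[(i)] In a coordinate where $c_i=u\gamma^{h_i}$, the product $\gamma^{s-1-h}c_i$ equals $u'\gamma^{s-1-h+h_i}$.
\item[(ii)] This is nonzero exactly when $h_i=h$, since $\gamma^{s-1-h}$ times a unit times $\gamma^{h_i}$ vanishes iff $h_i>h$.
\item[(iii)] Hence $c'$ is a nonzero element of $\mathcal{C}_0$, and its support is the set $S=\{i : \mathfrak{h}(c_i)=h\}$.
\end{itemize}
By the MDS property, $|S|=\text{wt}_H(c')\ge n-K+1$.

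Now bound the homogeneous weight of $c$. Every coordinate in $\operatorname{supp}(c)$ has weight at least $1$, and $|\operatorname{supp}(c)|\ge |S|\ge n-K+1$. Therefore $\text{wt}_{\text{Hom}}(c)\ge n-K+1$. It remains to compare this with $M(n-K)=\frac{q}{q-1}(n-K)$:
\[
n-K+1 > \frac{q}{q-1}(n-K) \iff (q-1)(n-K+1) > q(n-K) \iff q > n-K+1 .
\]
This is exactly the hypothesis. So $d_{\text{Hom}}(\mathcal{C}) > M(n-K)$, which means $\mathcal{C}$ is MHD by \Cref{MHDdef}.

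Two degenerate cases need checking. If $\mathcal{C}=\{0\}$, the convention $d_{\text{Hom}}(\{0\})=M(n+1)>Mn$ handles it. When $K=n$ the claim is trivial.

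The only real obstacle is the reduction step: checking that multiplying by $\gamma^{s-1-h}$ kills precisely the coordinates of height greater than $h$ and keeps the others nonzero. This uses $x=u\gamma^{h}$ with $u$ a unit, together with the fact that $\gamma^{j}$ is nonzero iff $j<s$. Everything else is the elementary inequality above.
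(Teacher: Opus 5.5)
Your proposal is correct and is essentially the paper's own argument. The paper runs the same chain as a contrapositive: it takes a nonzero $c$ with $\text{wt}_{\text{Hom}}(c)\le M(n-K)$, pushes it into the socle as $\gamma^g c\neq 0$, and uses $\text{wt}_{\text{Hom}}(\gamma^g c)\le M\,\text{wt}_{\text{Hom}}(c)$ to get $\text{wt}_H(\gamma^g c)\le M(n-K)<n-K+1$; your direct version, with the explicit multiplier $\gamma^{s-1-h}$ and the identification of the support of $c'$, is the same inequality read in the other direction.
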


\begin{proof}
One implication we have already seen in \Cref{mhdismds}. For the other direction we show the contraposition.
Note that for $a,b > 1$ it holds $a > b$ if and only if $\frac{a}{a-1} < \frac{b}{b-1}$, so our assumed inequality, i.e., $q>n-K+1$, can be written as \begin{align} \frac{q}{q-1}(n-K) < n-K+1.\end{align}
Now assume that $\mathcal{C}$ is not an MHD code. Then there exists a codeword $c \in  \mathcal{C} \setminus\{0\}$ with $\text{wt}_{\text{Hom}}(c) \leq M(n-K)$, where $M = \frac{q}{q-1}$. We can choose $g \in \{0, \ldots, s-1\}$ such that $\gamma^g c \in \mathcal{C}_0\setminus\{0\}$. For each $i \in \{1,\ldots,n\}$,   we have $\text{wt}_{\text{Hom}}(\gamma^g c_i) \leq M \text{wt}_{\text{Hom}}(c_i)$. In combination with the inequalities above, it holds
\[
\text{wt}_{\text{Hom}}(\gamma^g c) \leq M \text{wt}_{\text{Hom}}(c) \leq M^2(n-K) < M(n-K+1).
\]
Thus, the Hamming weight of $\gamma^g c$ satisfies $\text{wt}_H(\gamma^g c) = \frac{1}{M} \text{wt}_{\text{Hom}}(\gamma^g c) < n - K + 1$, i.e., $\mathcal{C}_0$ is not MDS.
\end{proof}

Next, we will see an example of an MHD code where \Cref{mainchar} can be applied.

\begin{example}\label{MHDex}
Let $\mathcal{R} = \mathbb Z/25\mathbb Z$ and let $\mathcal{C}$ be the code generated by
\begin{align} \begin{bmatrix}
1 & 1 & 1 & 1 & 1 & 0\\
0 & 5 & 10 & 15 & 20 & 0\\
0 & 5 & 20 & 20 & 5 & 5
\end{bmatrix}\end{align}
of length 6 and subtype $(1,2)$. The socle $\mathcal{C}_0$, seen over $\mathbb F_5$, is generated by 
\begin{align} \begin{bmatrix}
1 & 1 & 1 & 1 & 1 & 0\\
0 & 1 & 2 & 3 & 4 & 0\\
0 & 1 & 4 & 4 & 1 & 1
\end{bmatrix},\end{align}
which is the standard form of an extended Reed-Solomon code, i.e., $C_0$ is MDS. \Cref{mainchar} now implies that $\mathcal{C}$ is MHD. 
\end{example}

At first glance, \Cref{mainchar} may seem like a partial result only. But under assumption of the MDS conjecture it already covers almost all cases and the remaining ones can be characterized individually.

\begin{conjecture}[MDS Conjecture]
If there is a non-trivial MDS code $\mathcal{C} \subseteq \mathbb{F}_q^n$ of dimension $k$, then $n \leq q+1$, except when $q$ is even and $k = 3$ or $k = q-1$ in which case $n \leq q+2$.
\end{conjecture}
Note that the conjecture has been proven for prime $q$ in \cite{simeon}.

Assuming the conjecture is true, it follows from \Cref{mhdismds} that if $n>q+1$, then an MHD code has a trivial socle, i.e., $K \in \{0,1,n-1,n\}.$ An exception is the case where $q$ is even and $K = 3$ or $K = q-1$ where an MHD code with $n = q+2$ may exist. 
 Let us first consider the case $n\leq q+1,$ before we give a full characterization of all MHD codes. Note that if $q\leq n-K+1$, we have
$$n \leq   q+1 \leq n-K+2$$ 
which implies $K \leq 2$. Thus, for $n=q+1,$ we first consider $K=2$ separately.

\begin{theorem}\label{Keqtwo} Let $n = q + 1$.
Let $\mathcal{C} \subseteq \mathcal{R}^n$ be an MHD code of rank $K = 2$. Then $\mathcal{C}  \subseteq \langle \gamma^{s-1}\rangle^n$, that is the code lives in the socle.
\end{theorem}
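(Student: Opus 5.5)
The plan is to argue by contradiction. Since $n=q+1$ and $K=2$, the MHD condition reads
\[
d_{\text{Hom}}(\mathcal{C}) > M(n-K) = \tfrac{q}{q-1}(q-1) = q,
\]
so it suffices to exhibit, for any $\mathcal{C}\not\subseteq\langle\gamma^{s-1}\rangle^n$, a nonzero codeword of homogeneous weight at most $q$. By \Cref{mhdismds}, $\mathcal{C}_0$ is a $[q+1,2]$ MDS code $\overline{\mathcal{C}}$ over $\mathbb{F}_q$, so every nonzero socle word has Hamming weight at least $q$.

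The first structural observation is that $\overline{\mathcal{C}}$ contains \emph{no} word of full weight $q+1$. There are $q+1$ coordinates. Each coordinate kills a $1$-dimensional subspace of $\overline{\mathcal{C}}$: it is nonzero as a functional, because the code has minimum distance $q$ and so no zero column. No two coordinates share a kernel, since otherwise some nonzero word would vanish in two positions, contradicting $d=q$. Hence these kernels account for $(q+1)(q-1)=q^2-1$ nonzero words, which is every nonzero word of $\overline{\mathcal{C}}$. So every nonzero word of $\mathcal{C}_0$ has exactly one zero coordinate. In the same way, the projection of $\mathcal{C}_0$ onto any single coordinate is surjective onto $\langle\gamma^{s-1}\rangle$.

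Now suppose $c\in\mathcal{C}\setminus\mathcal{C}_0$. Replacing $c$ by a suitable $\gamma^{j}c$, we may assume $c\notin\mathcal{C}_0$ but $\gamma c\in\mathcal{C}_0\setminus\{0\}$. Then every coordinate of $c$ lies in $\langle\gamma^{s-2}\rangle$. A coordinate $c_i$ has height exactly $s-2$ precisely when $(\gamma c)_i\neq 0$, and otherwise $c_i\in\langle\gamma^{s-1}\rangle$. By the previous paragraph, $\gamma c$ has exactly $q$ nonzero coordinates. So $c$ has exactly $q$ coordinates of height $s-2$, and a single coordinate $i_0$ with $c_{i_0}\in\langle\gamma^{s-1}\rangle$.

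By surjectivity of the projection, choose $y\in\mathcal{C}_0$ with $y_{i_0}=-c_{i_0}$. Adding a socle element does not change the height of an element of height $s-2$. Hence $c+y\in\mathcal{C}$ has $q$ coordinates of homogeneous weight $1$ and a zero in position $i_0$. It is nonzero (indeed not in the socle), so
\[
\text{wt}_{\text{Hom}}(c+y)=q\le M(n-K),
\]
contradicting the MHD property. Therefore $\mathcal{C}\subseteq\langle\gamma^{s-1}\rangle^n$.

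The main obstacle is the step showing that $\gamma c$ cannot have full weight. Without it, a non-socle codeword could have all $q+1$ coordinates of height $s-2$ and weight $q+1>q$, and the argument would fail. The counting of the kernels of the coordinate functionals of a $[q+1,2]$ MDS code is what closes this gap, so I would check that count carefully.
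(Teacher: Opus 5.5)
Your proof is correct and follows essentially the same route as the paper. Both arguments use the fact that the $[q+1,2,q]$ socle has constant Hamming weight $q$ to show that a non-socle codeword has exactly one exceptional coordinate, and then cancel that coordinate with a socle word to obtain a nonzero codeword of homogeneous weight $q \le M(n-K)$. The differences are minor: you get the constant-weight property by counting kernels of coordinate functionals rather than from equality in the Plotkin bound, and your normalization $\gamma c \in \mathcal{C}_0$ puts the exceptional coordinate directly in the socle, which replaces the paper's separate case $c_j = 0$ and its rescaling by $\gamma^{s-1-\mathfrak{h}(c_j)}$.
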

\begin{proof}
By \Cref{MHDdef}, it holds $d_{\text{Hom}}(\mathcal{C})> q$. Assume that there exists $c \in \mathcal{C} \setminus \mathcal{C}_0$ and write $t := \min \{h \mid \exists i \text{ such that } \mathfrak{h}(c_i) = h\}$, so $t < s-1$ by assumption. 

We have $\gamma^{s-1-t}c \in \mathcal{C}_0 \setminus \{0\}$ and thus $\text{wt}_H(\gamma^{s-1-t}c) \geq d_H(\mathcal{C}_0) = n - 1$, since $\mathcal{C}_0$ is MDS by \Cref{mhdismds}. It follows by the minimality of $t$, that there exists a set $S \subseteq \{1,\dots,n\}$ of size $|S| \geq n-1$ with $\mathfrak{h}(c_i) = t$ for all $i \in S$. The case $|S| = n$ leads to a contradiction to the Plotkin bound. In fact, recall that the Plotkin bound states that $$d_H(\mathcal{C}_0) \leq \frac{nq^{K-1}(q-1)}{q^K-1}.$$  Since $d_H(\mathcal{C}_0) = q$ and $K = 2$, we have that  $\mathcal{C}_0$ meets the Plotkin bound with equality. Thus, the socle $\mathcal{C}_0$ has constant Hamming weight $n - 1$, whereas $|S| = n$ would imply $\text{wt}_H(\gamma^{s-1-t}c) = n$.

Hence, we have $|S| = n - 1$. Let $j$ be the only index in $\{1, \ldots, n\}\setminus S$ and note that $c_j = 0$  is impossible, since it would imply $\text{wt}_{\text{Hom}}(c) = \text{wt}_H(c) = n - 1 < d_{\text{Hom}}(\mathcal{C})$. Now define $c' = \gamma^{s-1-\mathfrak{h}(c_j)} c$, so $c'_j \in \langle \gamma^{s-1}\rangle \setminus \{0\}$. Since  $t$ was chosen minimal and $|S|\neq n$, we get that $\mathfrak{h}(c_j)>t$ and hence $c'_i \in \mathcal{R} \setminus \langle \gamma^{s-1}\rangle$ for all $i \in S$. Since $\mathcal{C}_0$ is an MDS code, it must contain $x \in\mathcal{C}_0$ with 
$x_j \neq 0$. As $\langle \gamma^{s-1} \rangle \setminus \{0\} = \mathcal{R}^\times \gamma^{s-1}$ there exists a unit $u \in \mathcal{R}^\times$ such that $u c'_j + x_j = 0$.   
For all $i \in S$ it holds $\mathfrak{h}(uc'_i) = \mathfrak{h}(c'_i) < s-1$ and $\mathfrak{h}(x_i) \geq s-1$, thus $\mathfrak{h}(uc'_i + x_i) < s-1$, i.e., $\text{wt}_{\text{Hom}}(uc'_i + x_i) = 1$.
In total, we get $\text{wt}_{\text{Hom}}(uc' + x) = q < d_{\text{Hom}}(\mathcal{C})$, which is a contradiction.
\end{proof}

It shall be noted here that the result from Theorem \ref{Keqtwo} is in fact an equivalence, since any MDS code $\mathcal{C}  \subseteq \langle \gamma^{s-1}\rangle^n$ is automatically MHD.

We are now ready to move to the full characterization of MHD codes.

\begin{theorem}\label{thm:fullchar}
    Assuming the MDS conjecture, the following conditions under their respective parameters are equivalent to $\mathcal{C} \subseteq \mathcal{R}^n$ being an MHD code
    \begin{enumerate}
        \item all trivial codes are MHD, i.e. $K \in \{0,n\}$,
        \item $n >q+1$ and $K=1$ with $\mathcal{C} \subseteq \langle \gamma^{s-1} \rangle^n$ is an MDS code,
        \item $n>q+1$, $q>2$ and $K=n-1$ with $\mathcal{C} \cap \langle \gamma^{s-1} \rangle^n$ is an MDS code,
        \item $n=q+1$ and $K \in \{1,2\}$ with $\mathcal{C} \subseteq \langle \gamma^{s-1}\rangle^n$ is an MDS code,
        \item $n=q$ and $K = 1$ with $\mathcal{C} \subseteq \langle \gamma^{s-1}\rangle^n$ is an MDS code,
        \item $n \leq q+1$ and $K>n-q+1$ with $\mathcal{C} \cap \langle \gamma^{s-1} \rangle^n$ is an MDS code,
    \end{enumerate}
    Additional MHD codes exist in the following cases with $\mathcal{C} \cap \langle \gamma^{s-1} \rangle^n$ being MDS as a necessary condition:
    \begin{itemize}
        \item $n > q+1$, $q=2$ and $K=n-1$, (so $\mathcal{C} \cap \langle \gamma^{s-1} \rangle^n$ is a single-parity-check code),
        \item $q$ even, $n=q+2$ and $K \in \{3, n-3\}$, (so $\mathcal{C} \cap \langle \gamma^{s-1} \rangle^n$ is an exceptional MDS code).
    \end{itemize}
\end{theorem}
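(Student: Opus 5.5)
The plan is to treat necessity and sufficiency together, using \Cref{mhdismds} as the universal necessary condition: every MHD code has MDS socle $\mathcal{C}_0\cong\overline{\mathcal{C}}\subseteq\mathbb{F}_q^n$ of dimension $K$. The MDS conjecture then restricts the admissible pairs $(n,K)$, and I split into the regimes $n>q+1$, $n=q+2$ with $q$ even, and $n\le q+1$.

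\textbf{Trivial codes.} For $\mathcal{C}=\{0\}$ the convention gives $d_{\text{Hom}}=M(n+1)>Mn$. For $\mathcal{C}=\mathcal{R}^n$ we have $d_{\text{Hom}}=1>0=M(n-n)$. So both are MHD, which is item (1).

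\textbf{A lemma for rank one.} I will first prove the following: if $K=1$, $\mathcal{C}_0$ is MDS (the repetition code), and $c\in\mathcal{C}\setminus\mathcal{C}_0$, then $\text{wt}_{\text{Hom}}(c)=n$.
\begin{itemize}
\item Let $t$ be the minimal height of an entry of $c$, so $t<s-1$.
\item The vector $\gamma^{s-1-t}c$ is a nonzero element of $\mathcal{C}_0$, and its support is exactly the set of indices $i$ with $\mathfrak{h}(c_i)=t$.
\item Since $\mathcal{C}_0$ is the repetition code, this support is all of $\{1,\dots,n\}$. Hence every entry of $c$ has height $t<s-1$, and so $\text{wt}_{\text{Hom}}(c)=n$.
\end{itemize}
Now $n>M(n-1)$ holds iff $q>n-1$. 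Consequently, for $n\ge q$ an MHD code of rank $1$ must lie in the socle. Conversely, a code inside the socle with MDS (repetition) structure has $d_{\text{Hom}}=Mn>M(n-1)$, so it is MHD. This settles item (2) (where $n>q+1$), item (5) ($n=q$), and the $K=1$ half of item (4).

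\textbf{Rank $n-1$.} Here $\mathcal{C}_0$ is the single-parity-check code, so $d_H(\mathcal{C}_0)=2$, and MHD means $d_{\text{Hom}}>M$.
\begin{itemize}
\item A nonzero $c\in\mathcal{C}$ of Hamming weight $1$ would produce a nonzero $\gamma^g c\in\mathcal{C}_0$ of Hamming weight $1$, which is impossible. Hence every nonzero codeword has Hamming weight at least $2$, and therefore $\text{wt}_{\text{Hom}}\ge 2$.
\item If $q>2$ then $M<2$, so $d_{\text{Hom}}\ge 2>M$ and MDS socle suffices. This is item (3).
\item If $q=2$ then $M=2$, and a weight-$2$ codeword with a non-socle entry has $\text{wt}_{\text{Hom}}<4$. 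It may or may not violate the bound, so here MDS socle is only necessary. This is the first exceptional bullet.
\end{itemize}

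\textbf{The case $n\le q+1$.} If $K>n-q+1$, i.e.\ $q>n-K+1$, then \Cref{mainchar} gives the equivalence directly, which is item (6). Otherwise $q\le n-K+1$, and as noted before \Cref{Keqtwo} this forces $K\le2$ together with $n\ge q+K-1$. The remaining possibilities are:
\begin{itemize}
\item $K=1$ with $n\in\{q,q+1\}$, handled by the rank-one lemma;
\item $K=2$ with $n=q+1$, handled by \Cref{Keqtwo} plus the remark after it that socle-contained MDS codes are MHD. Together with the $K=1$ case this completes item (4).
\end{itemize}
The cases $K=1$ with $n<q$ and $K=2$ with $n=q$ already satisfy $K>n-q+1$, so they fall under item (6).

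\textbf{The regime $n>q+1$.} By the MDS conjecture, a nonzero non-full MDS socle forces $K\in\{1,n-1\}$, which were handled above. The only exception is $q$ even, $n=q+2$, $K\in\{3,q-1\}=\{3,n-3\}$. For that exception I only claim the necessary condition from \Cref{mhdismds}, giving the second exceptional bullet.

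The main obstacle is bookkeeping rather than depth: checking that the case split is exhaustive and that the boundary cases ($n=q$ and $n=q+1$ with $K\le2$) are each covered by exactly one argument. Mathematically, the key new step is the rank-one lemma, that the minimal-height support must equal the full MDS support.
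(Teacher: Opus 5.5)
Your route matches the paper's. You use \Cref{mhdismds} plus the MDS conjecture to restrict $(n,K)$, the minimal-height argument for $K=1$, \Cref{mainchar} for $K>n-q+1$, and \Cref{Keqtwo} with the following remark for $n=q+1$, $K=2$. In the exceptional cases you claim only necessity. Your direct argument for $K=n-1$, $q>2$ amounts to \Cref{mainchar} specialized, and your exhaustiveness check for $n\le q+1$ is more explicit than the paper's.

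There is, however, an arithmetic slip exactly at the boundary that item (5) depends on:
\[
n>M(n-1)\iff n(q-1)>q(n-1)\iff q>n,
\]
not $q>n-1$. Taken literally, your version would make every rank-one code with MDS socle MHD when $n=q$, since its non-socle codewords would then satisfy $n>M(n-1)$. That contradicts item (5), and your ``consequently, for $n\ge q$'' does not follow from what you wrote. With the corrected equivalence it does follow: for $n\ge q$ we get $n\le M(n-1)$, so any codeword outside the socle violates the bound. The complementary range $n<q$ is then consistent with $K=1$ falling under item (6).

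There are two smaller points.
\begin{enumerate}
\item For $q=2$, $K=n-1$, you say a Hamming-weight-$2$ codeword with a non-socle entry ``may or may not violate the bound''. That is imprecise. By your own minimal-height argument, both of its entries have the same height $t<s-1$; otherwise $\gamma^{s-1-t}c$ would be a weight-one socle word. So its homogeneous weight is exactly $2=M(n-K)$, and it always violates MHD. What varies is whether the code contains such words at all (compare $G_3$ and $G_4$ in the paper).
\item The statement also asserts that additional MHD codes \emph{exist} in the two exceptional cases. Your proposal proves only that an MDS socle is necessary there. The paper supports existence with explicit examples: $\mathcal{C}^{(1)}$ from \Cref{counterex1} and the $\mathbb{Z}/4\mathbb{Z}$ code generated by $G_3$.
\end{enumerate}
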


Thus, except for these two special cases, either all lifted MDS codes are MHD or only those which live in the socle, where the distinction depends solely on the base parameters.

\begin{proof}
Recall that due to the MDS conjecture and Theorem \ref{mhdismds}, we get that if $n > q+1$, then $K \in \{0,1,n-1,n\}.$ An exception is the case where $q$ is even and $K = 3$ or $K = q-1$ where an MHD code with $n = q+2$ exists, as Proposition \ref{counterex1} shows and this covers the last additional case.

\begin{enumerate} 
\item For $K=0$, as we defined $d_{\text{Hom}}(\{0\})= M(n+1)$, we get that $\{0\}$ is an MHD code. Similarly, for $K=n$, the definition of MHD requires $d_{\text{Hom}}(\mathcal{C}) >M(n-K)=0$, which immediately leads to any code of rank $n$ being MHD.

\item \label{Keqone} In the case $K=1,$ we have a code $\mathcal{C} = \langle g \rangle \subseteq \mathcal{R}^n.$ If $\mathcal{C}$ is non-degenerate and  has rank $K=k_i$, i.e., $g_j \in \langle \gamma^i \rangle$ for all $j \in \{1, \ldots, n\}$, we can still have a degenerate socle $\mathcal{C}_0 = \langle \gamma^{s-i-1} g \rangle$. By Theorem   \ref{mhdismds}, however, we must have $d_H(\mathcal{C}_0)=n,$ hence $\mathfrak{h}(g_j)=i$ for all $j \in \{1, \ldots, n\}$ and thus $\mathcal{C}$ is not MHD if $K \neq k_{s-1}$ as $d_{\text{Hom}}(\mathcal{C})= n \leq M(n-1)$. Thus, for any $n \geq q$, if $K=1$,  all MHD codes are given by $\mathcal{C} = \langle g \rangle$ with $\mathfrak{h}(g_j)=s-1$ for all $j \in \{1, \ldots, n\},$ i.e., they live solely in the socle.

\item Here we have $q > 2 = n - K + 1$, so Theorem \ref{mainchar} can be applied.

\item This case is covered by Theorem \ref{Keqtwo} and the argument for case \ref{Keqone} above.

\item This case is covered by the argument for case \ref{Keqone} above.

\item Follows from Theorem \ref{mainchar}.

\end{enumerate}
Together with the two additional cases where only Theorem \ref{mhdismds} holds, this covers all possible parameters.

\end{proof}

We now give examples which show that the two special cases in Theorem \ref{thm:fullchar} do not follow the same pattern as the other cases, i.e., neither do all MHD codes live in the socle nor is the MDS property sufficient for a lifted code to be MHD. Two of the following codes are based on the smallest exceptional MDS code from \cite{hurley}:

\begin{lemma}\label{exceptionalMDS}
Write $\mathbb F_4 = \mathbb F_2(\alpha)$ with $\alpha^2 = \alpha + 1$. The code  $\mathcal{C} \subseteq \mathbb F_4^6$ generated by 
\begin{align} G =
\begin{bmatrix}
1 & 1 & 1 & 1 & 0 & 0\\
0 & 1 & \alpha & (\alpha + 1) & 1 & 0\\
0 & 1 & (\alpha + 1) & \alpha & 0 & 1
\end{bmatrix}
\end{align} is an MDS code.
\end{lemma}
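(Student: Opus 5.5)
The plan is to use the standard criterion for systematic generator matrices. A linear code with generator matrix $[I_k \mid A]$ is MDS if and only if every square submatrix of $A$, of every size, is nonsingular. Equivalently, every $k$ columns of the generator matrix are linearly independent.

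Permuting columns does not change the weight distribution. So I first reorder the columns of $G$ as $(1,5,6,2,3,4)$. This puts the generator matrix in the form $[I_3 \mid A]$ with
\[
A=\begin{bmatrix} 1&1&1\\ 1&\alpha&\alpha+1\\ 1&\alpha+1&\alpha\end{bmatrix}.
\]
Here column $1$ of $G$ is $(1,0,0)^\top$, column $5$ is $(0,1,0)^\top$ and column $6$ is $(0,0,1)^\top$. It therefore remains to check three things in $\mathbb F_4$, using $\alpha^2=\alpha+1$, $\alpha(\alpha+1)=1$ and characteristic $2$.

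First, all nine $1\times 1$ minors are nonzero, which is immediate from the entries.

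Second, all nine $2\times 2$ minors are nonzero. Each is a difference of two products of entries from $\{1,\alpha,\alpha+1\}$, so it reduces to one of the following:
\begin{itemize}
\item $\alpha-1=\alpha+1$;
\item $(\alpha+1)-1=\alpha$;
\item $\alpha^2-(\alpha+1)^2 = (\alpha+1)-\alpha = 1$;
\item $(\alpha+1)-\alpha=1$.
\end{itemize}
None of these vanishes. Each listed value should be verified against the specific minor it arises from, so all nine are checked explicitly.

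Third, $\det A\neq 0$. Expanding along the first row gives
\[
\det A=(\alpha^2-(\alpha+1)^2)-(\alpha-(\alpha+1))+((\alpha+1)-\alpha)=1+1+1=1\neq 0 .
\]

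Together these show that every square submatrix of $A$ is invertible, hence $d_H(\mathcal{C})=6-3+1=4$ and $\mathcal{C}$ is MDS.

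An equivalent way to argue avoids the criterion. A nonzero codeword $xG$ of weight at most $3$ would vanish on at least $3$ coordinates. That would force some $3\times3$ column submatrix of $G$ to be singular, which the minor computations above rule out.

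The only real obstacle is bookkeeping: making sure all $9$ two-by-two minors are covered and the field arithmetic is right. The symmetry of $A$ under swapping rows $2,3$ together with columns $2,3$ roughly halves the number of cases to check.
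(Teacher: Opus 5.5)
Your proof is correct. After reordering the columns as $(1,5,6,2,3,4)$ you get $[I_3\mid A]$. All nine entries of $A$, all nine $2\times 2$ minors (each equals $1$, $\alpha$ or $\alpha+1$) and $\det A=1$ are nonzero in $\mathbb F_4$. So every three columns of $G$ are linearly independent and $d_H(\mathcal{C})=4$.

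The paper gives no argument for this lemma. It identifies $G$ as the smallest exceptional MDS code from \cite{hurley}, a $[6,3,4]$ code over $\mathbb F_4$ equivalent to the hexacode, and relies on that reference.

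The two routes buy different things. Your direct check makes the lemma self-contained and elementary. That is useful because Proposition~\ref{counterex1} reduces to this lemma through explicit row operations, so the whole chain becomes verifiable without the literature. The citation carries the context the paper actually cares about: this code realises the exceptional case $q$ even, $k=3$, $n=q+2$ of the MDS conjecture. Geometrically, your minor computation says the six columns of $G$ are points of $\mathrm{PG}(2,4)$ with no three collinear, i.e.\ a hyperoval, which is the classical reason such a code exists.

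One cosmetic point: you assert rather than list the nine $2\times 2$ minors. Writing them out, or reducing the cases by the row/column symmetry you mention, would make the verification fully explicit.
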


\begin{proposition}\label{counterex1}
Let $\mathcal{R} = \mathbb F_4[x]/(x^2)$ where $\mathbb F_4 = \mathbb F_2(\alpha)$ with $\alpha^2 = \alpha + 1$. The code $\mathcal{C}^{(1)}$ generated by $G_1 \in \mathcal{R}^{3 \times 6}$ is MHD but not in $\langle \gamma^{s-1}\rangle^6$ and the code $\mathcal{C}^{(2)}$ generated by $G_2 \in \mathcal{R}^{3 \times 6}$ is not MHD while $\mathcal{C}^{(2)}_0$ is MDS, where
\begin{align}
G_1 :=
\begin{bmatrix}
1 & \alpha & 1 & \alpha & \alpha & 1\\
0 & x & 0 & x & (\alpha + 1)x & \alpha x\\
0 & 0 & x & x & x & x
\end{bmatrix},\quad
G_2 :=
\begin{bmatrix}
1 & 1 & 1 & 1 & 0 & 0\\
0 & x & \alpha x & (\alpha + 1)x & x & 0\\
0 & x & (\alpha + 1)x & \alpha x & 0 & x
\end{bmatrix}.
\end{align}
\end{proposition}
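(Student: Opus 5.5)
The plan is to exploit the special parameters. Here $\mathcal{R}=\mathbb F_4[x]/(x^2)$, so $\gamma=x$, $s=2$, $q=4$ and $M=\tfrac43$. Both codes have length $n=6$ and rank $K=3$, so by \Cref{MHDdef} being MHD means $d_{\text{Hom}}(\mathcal{C})>M(n-K)=4$. For each code I would first determine the socle and then split the nonzero codewords into those inside the socle and those outside it.

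\emph{The code $\mathcal{C}^{(2)}$.} The socle $\mathcal{C}^{(2)}_0$ is spanned by $x$ times the first row of $G_2$ together with the second and third rows. Dividing by $x$, these three vectors are exactly the generator matrix $G$ of \Cref{exceptionalMDS} over $\mathbb F_4$, so $\mathcal{C}^{(2)}_0$ is MDS. To see that $\mathcal{C}^{(2)}$ is not MHD, it suffices to exhibit the first row $(1,1,1,1,0,0)$. Its entries are units outside the socle, so its homogeneous weight is $4\le M(n-K)$.

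\emph{The code $\mathcal{C}^{(1)}$.} The code is not contained in $\langle x\rangle^6$, because the first row of $G_1$ consists of units. Its socle is generated over $\mathbb F_4$ (after dividing by $x$) by
\[
(1,\alpha,1,\alpha,\alpha,1),\quad (0,1,0,1,\alpha+1,\alpha),\quad (0,0,1,1,1,1).
\]
I would show this $[6,3]$ code is MDS, i.e.\ that all $\binom{6}{3}=20$ maximal minors are nonzero in $\mathbb F_4$. A possible shortcut is to row-reduce to systematic form and check that every entry and every $2\times 2$ minor of the nonsystematic part is nonzero. Alternatively, one can find a monomial equivalence to the code of \Cref{exceptionalMDS}. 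This verification is the main computational step.

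Given the MDS property, every nonzero socle codeword has Hamming weight at least $4$. Hence its homogeneous weight is at least $M\cdot 4=\tfrac{16}{3}>4$.

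Now take a codeword $c=u g_1+a g_2+b g_3$, where $g_i$ are the rows of $G_1$. If $u\in\langle x\rangle$, then $c$ lies in the socle, which was already handled. Otherwise $u$ is a unit. Reducing modulo $x$ kills $g_2$ and $g_3$, so $c \bmod x=\bar u\,(1,\alpha,1,\alpha,\alpha,1)$, which has all coordinates nonzero. Thus every coordinate of $c$ is a nonzero element outside the socle, and $\text{wt}_{\text{Hom}}(c)=6>4$.

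Combining both cases gives $d_{\text{Hom}}(\mathcal{C}^{(1)})>4$, so $\mathcal{C}^{(1)}$ is MHD but not contained in $\langle\gamma^{s-1}\rangle^6$.
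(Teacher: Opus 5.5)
Your proposal is correct and follows the paper's proof essentially step by step. Both arguments identify the socles with the code of \Cref{exceptionalMDS}, observe that every codeword of $\mathcal{C}^{(1)}$ outside the socle has six unit coordinates and hence weight $6$, and use the first row of $G_2$, of weight $4=M(n-K)$, to show $\mathcal{C}^{(2)}$ is not MHD.

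The computational step you leave open is settled in the paper by an explicit change of basis. The invertible matrix $P=\begin{bmatrix}1&\alpha+1&0\\0&1&\alpha\\0&1&\alpha+1\end{bmatrix}$ maps your three socle generators of $\mathcal{C}^{(1)}$ to exactly the matrix $G$ of \Cref{exceptionalMDS}. So the socle equals that code, and neither a minor computation nor a monomial equivalence is needed. If you do use your systematic-form shortcut, note that MDS also requires the full $3\times 3$ determinant of the non-systematic part to be nonzero, not only its entries and $2\times 2$ minors.
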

\begin{proof}
We have $\gamma =x, s = 2$ and $\mathcal{R} / \langle x \rangle= \mathbb F_4$. The first row of $G_1$ is already a codeword not in $(x)^6$. To prove the MHD property, we need to show $d_{\text{Hom}}(\mathcal{C}^{(1)}) > 4$. We can multiply the first row of $G_1$ by $x$ to get the generator matrix of $\mathcal{C}^{(1)}_0$. Thus, the socle can be viewed as a code $\overline{\mathcal{C}^{(1)}} \subseteq \mathbb{F}_4^6$ generated by
\begin{align} \begin{bmatrix}
1 & \alpha & 1 & \alpha & \alpha & 1\\
0 & 1 & 0 & 1 & (\alpha + 1) & \alpha\\
0 & 0 & 1 & 1 & 1 & 1
\end{bmatrix}.
\end{align}
Multiplying with an invertible matrix from the left corresponds to a change of basis and thus leaves the generated code unchanged. By multiplication with
\[\begin{bmatrix}
1 & \alpha + 1 & 0\\
0 & 1 & \alpha\\
0 & 1 & \alpha + 1
\end{bmatrix},\]
we get the matrix $G$ from \Cref{exceptionalMDS}, which means that $\overline{\mathcal{C}^{(1)}}$ is an MDS code. Thus, for all $c \in \mathcal{C}^{(1)}_0 \setminus \{0\}$ it holds $\text{wt}_{\text{Hom}}(c) = \frac{4}{3} \text{wt}_H(c) \geq \frac{4}{3} \cdot 4 = \frac{16}{3}$ (and there is a codeword which achieves equality). So to show that $d_{\text{Hom}}(\mathcal{C}^{(1)}) > 4$, it only remains to consider $c \in \mathcal{C}^{(1)} \setminus \mathcal{C}^{(1)}_0$. We must have an $\mathbb F_4^\times$-multiple of the first generator in the linear combination of $c$, so  we get $\text{wt}_{\text{Hom}}(c) = \text{wt}_H(c) = 6$. This proves that $d_{\text{Hom}}(\mathcal{C}^{(1)}) = \frac{16}{3}$, and hence $\mathcal{C}$ is MHD.

By the same argument as above, we obtain that $\mathcal{C}^{(2)}_0$ is the code from \Cref{exceptionalMDS}. That  $\mathcal{C}^{(2)}$ is not MHD follows from the first generator having homogeneous weight 4, which already implies $d_{\text{Hom}}(\mathcal{C}^{(2)}) \leq 4$.
\end{proof}

\begin{proposition}
Let $\mathcal{R} = \mathbb{Z}/4\mathbb{Z}$. The code $\mathcal{C}^{(3)}$ generated by $G_3 \in \mathcal{R}^{3 \times 4}$ is MHD but not in $\langle \gamma^{s-1} \rangle^6$ and $\mathcal{C}^{(4)}$ generated by $G_4 \in \mathcal{R}^{3 \times 4}$ is not MHD while $\mathcal{C}^{(4)}_0$ is MDS, where
\[
G_3 := 
\begin{bmatrix}
1 & 1 & 1 & 1\\
0 & 2 & 0 & 2\\
0 & 0 & 2 & 2
\end{bmatrix},\quad
G_4 := 
\begin{bmatrix}
1 & 1 & 1 & 1\\
0 & 1 & 0 & 1\\
0 & 0 & 2 & 2
\end{bmatrix}.
\]
\end{proposition}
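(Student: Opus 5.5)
We mirror the proof of \Cref{counterex1}. Here $\mathcal{R}=\mathbb{Z}/4\mathbb{Z}$, so $\gamma=2$, $s=2$, $q=2$, $M=2$, and $n=4$. Both codes have rank $K=3$. A code is therefore MHD exactly when $d_{\text{Hom}}>M(n-K)=2$, that is, when $d_{\text{Hom}}\geq 3$ (weights here are integers). This is the case $q=2$, $K=n-1$ from \Cref{thm:fullchar}, so the socle should be the single-parity-check code of length $4$ over $\mathbb{F}_2$.

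\emph{Socles.} For $G_3$, multiplying the first row by $2$ gives $(2,2,2,2)$. Together with $(0,2,0,2)$ and $(0,0,2,2)$ this generates $\mathcal{C}^{(3)}_0$, which over $\mathbb{F}_2$ is spanned by $(1,1,1,1)$, $(0,1,0,1)$ and $(0,0,1,1)$. These three vectors all have even weight and are linearly independent, so they span the even-weight code of dimension $3$. That is the single-parity-check code, which is MDS with $d_H=2$. Every nonzero socle codeword thus has homogeneous weight $2\cdot\text{wt}_H\geq 4$.

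For $G_4$, the socle is generated by $2\cdot(1,1,1,1)$, $2\cdot(0,1,0,1)$ and $(0,0,2,2)$. This is the same even-weight code, so $\mathcal{C}^{(4)}_0$ is MDS as well.

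\emph{$\mathcal{C}^{(3)}$ is MHD.} Let $c$ be a codeword outside the socle. Then $c=a(1,1,1,1)+y$ with $a$ odd and $y\in\langle 2\rangle^4$. Every coordinate of $c$ is odd, so $c$ is a vector of units and $\text{wt}_{\text{Hom}}(c)=4$. Combined with the socle bound, this gives $d_{\text{Hom}}(\mathcal{C}^{(3)})=4>2$. Since the first row of $G_3$ has unit entries, $\mathcal{C}^{(3)}\not\subseteq\langle 2\rangle^4$. (The statement says $\langle\gamma^{s-1}\rangle^6$; this should be read as $\langle\gamma^{s-1}\rangle^4$.)

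\emph{$\mathcal{C}^{(4)}$ is not MHD.} The second row of $G_4$, namely $(0,1,0,1)$, is a codeword with $\text{wt}_{\text{Hom}}=2\leq M(n-K)$. Hence $\mathcal{C}^{(4)}$ is not MHD, even though $\mathcal{C}^{(4)}_0$ is MDS.

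\emph{Ranks.} It remains to confirm that $K=3$ in both cases; this is the only step needing care. For $G_3$, the three rows already have the standard form of \Cref{sysform}: subtype $(1,2)$ after permuting columns so that the identity blocks come first, giving $K=3$. For $G_4$, the subtype is $(2,1)$, and the socle computed above has $\mathbb{F}_2$-dimension $3$, which again gives $K=3$.
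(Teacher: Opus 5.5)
Your proposal is correct and takes essentially the same approach as the paper. You identify both socles as the length-$4$ single-parity-check code, note that every codeword of $\mathcal{C}^{(3)}$ outside the socle is a vector of units and so has homogeneous weight $4>M(n-K)=2$, and use the row $(0,1,0,1)$ of $G_4$, of homogeneous weight $2$, to show $\mathcal{C}^{(4)}$ is not MHD; your explicit check that $K=3$ for both codes and your correction of $\langle\gamma^{s-1}\rangle^6$ to $\langle\gamma^{s-1}\rangle^4$ fill in details the paper leaves implicit.
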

\begin{proof}
By our usual method, we get that the socle of both codes is the single parity check code, which is MDS. Thus, $\mathcal{C}^{(3)}_0$ is MHD. Any codeword $c \in \mathcal{C}^{(3)} \setminus \mathcal{C}^{(3)}_0$ must have an $\mathcal{R}^\times$-multiple of the first row of $G_3$ in its linear combination, i.e., $c$ has all nonzero entries. So $\text{wt}_{\text{Hom}}(c) \geq 4$ and $\mathcal{C}^{(3)}$ is MHD. That $\mathcal{C}^{(4)}$ is not MHD follows from the second row of $G_4$ having homogeneous weight 2.
\end{proof}

Finally, we turn our attention to whether the MHD property is invariant under duality, as the MDS property is over finite fields. For this we go through our different cases of the characterization of MHD codes of Theorem \ref{thm:fullchar}, namely 
$K \in \{0,1,n-1,n\},$ or $n=q+1$ with $K=2$, or the lifted MDS codes for $n \leq q+1$.

If $K=0$, the dual code is $\mathcal{R}^n$, which is an MHD code. However, if $K=n,$ we do not necessarily get that $\mathcal{C}^\perp$ is MHD. In fact, we can consider the following example.

 \begin{example}
     Let us consider $\mathcal{R}= \mathbb{Z}/27\mathbb{Z}$ and the code $\mathcal{C} = \langle (1,6), (0,9) \rangle$. 
   As the code has rank $K=n=2,$ we immediately get that $\mathcal{C}$ is MHD. On the other hand, its dual $\mathcal{C}^\perp$ has rank 1 and is generated by 
     $(9,3)$ and is hence not MHD. 
 \end{example}

Also for the case $K=n-1,$ we note that again, not all dual codes have to be MHD as well.

\begin{example}
    Let us consider $\mathcal{R}= \mathbb{Z}/9\mathbb{Z}$ and the code $\mathcal{C} = \langle (1,0,1), (0,1,1) \rangle$. As $d_{\text{Hom}}(\mathcal{C})=2 >M =3/2$, we get that $\mathcal{C}$ is MHD. However, its dual code $\mathcal{C}^\perp$ is generated by 
$(8,8,1)$ and as $K=1$ but $\mathcal{C}^\perp$ is not in the socle, we get that $\mathcal{C}^\perp$ is not MHD.
\end{example}

If $K=1$, recall that $\mathcal{C}$ is only MHD if $\mathcal{C}$ has subtype $(0, \ldots, 0,1)$, which implies that $\mathcal{C}^\perp$ has rank $n.$ Thus, the dual of such MHD codes are again MHD.

If $n=q+1,$ and $ K=2$ we can apply Theorem \ref{Keqtwo} and get that $\mathcal{C}$ has subtype $(0, \ldots, 0, 2)$, and thus, the dual code $\mathcal{C}^\perp$ has rank $n$, which is an MHD code.  

For the case $n<q+1,$ we can consider again the Example \ref{ex:tighterSing}. 
  
\begin{example}
    Let $p \geq n$ and $\mathcal{C} \subseteq (\mathbb{Z}/p^s\mathbb{Z})^n$ with generator matrix
    \begin{align}
        G = \begin{bmatrix}
            1 & 0 & 1 & \cdots & 1 \\ 0 & 1 & 2 & \cdots & n-1
        \end{bmatrix}.
    \end{align}
    As the code $\mathcal{C}$ is free of rank $K=2,$ we get immediately the dual code $\mathcal{C}^\perp$ of rank $n-2$ generated by 
    \begin{align}
                     H= \begin{bmatrix}
            p^s-1 & p^s-2 & &    &  \\
            \vdots & \vdots &  & I_{n-2} &\\
            p^s-1 & p^s-n+1 &  &  & 
        \end{bmatrix}.
    \end{align}
    It can be checked that the socle of $\mathcal{C}^\perp$ is an MDS code. So for $p > 3$ we can apply Theorem \ref{mainchar} and obtain that $\mathcal{C}^\perp$ is MHD. For $n = p = 3$, $\mathcal{C}^\perp$ is of rank $1$ and not MHD. If $p = 2$ we necessarily have $n = 2$, so $\mathcal{C}^\perp = \{0\}$ is MHD trivially.
\end{example}

Lastly, we consider again the exceptional MHD code from Proposition \ref{counterex1}. That is: we set $\mathcal{R} = \mathbb F_4[x]/(x^2)$ where $\mathbb F_4 = \mathbb F_2(\alpha)$ with $\alpha^2 = \alpha + 1$. We consider the code $\mathcal{C}^{(1)}$ generated by
\begin{align}
G_1 :=
\begin{bmatrix}
1 & \alpha & 1 & \alpha & \alpha & 1\\
0 & x & 0 & x & (\alpha + 1)x & \alpha x\\
0 & 0 & x & x & x & x
\end{bmatrix}.
\end{align}

We can compute a parity-check matrix given by
\begin{align} H_1 = \begin{bmatrix}
(\alpha+1)x+1 & \alpha x & 1 & 0 & 0 & 1 \\
x+ \alpha+1 & (\alpha+1)x & 1 & 0 & 1 &  0 \\
    \alpha x + \alpha+1 & x & 1 & 1 & 0 & 0\\
     x & 0 & x & 0 & 0 & 0 \\
     x & x & 0 & 0 &  0 & 0 
\end{bmatrix}.
\end{align}
The code generated by $H_1$ has minimum homogeneous distance $d=4/3 \leq M(n-K)= 4/3,$ hence it is not an MHD code.

Finally, we might ask whether MHD codes are dense or sparse. Due to Theorem \ref{mhdismds} and \ref{thm:fullchar}, we get that the density of MHD codes is the same as that of MDS codes.
For this recall that if we fix $n$ and $K$ and let $q$ grow, then MDS codes are dense, while if we fix $q$ and a rate $R=K/n$ and let $n$ grow, MDS codes are sparse.

Thus, if we fix $q$ and $R$ and let $n$ grow, we get that the density of MDS code $\overline{\mathcal{C}} \subseteq \mathbb{F}_q^n$ is 0, and hence also the density of MHD codes over $\mathcal{R}^n,$ with residue field size $q$, which must be lifted MDS codes from $\mathbb{F}_q^n$, is 0.

On the other hand, if we fix $n,K$ and let $q$ grow, in all the cases of Theorem \ref{thm:fullchar} where $\mathcal{C} \cap \langle \gamma^{s-1} \rangle$ being MDS is equivalent to the MHD property, since MDS codes have density 1, MHD codes do as well.
 
\section{Plotkin Bound and Constant Homogeneous Weight Codes}\label{sec:plotkin}

The Plotkin bound is a standard result for codes over finite fields and has already seen a generalization for finite chain rings and the homogeneous weight. In \cite{greferath}, the authors developed such a generalization for Frobenius rings, which encompass finite chain rings. However, their proof relies on a representation of the homogeneous weight using character theory. Our goal in this section will be to prove the linear version of the bound for finite chain rings using only elementary algebra. For this purpose, we can build on the standard proof of the classical version of the bound, which proceeds via the average weight of a code. For our asymptotic analysis in Section \ref{sec:asymptotic} we will still rely on the existing result, which extends to non-linear codes.
 
\begin{definition}
The average homogeneous weight of a code is defined as
\[\overline{\text{wt}_{\text{Hom}}}(\mathcal{C}) = \frac{1}{|\mathcal{C}|}\sum_{c \in \mathcal{C}} \text{wt}_{\text{Hom}}(c).\]
\end{definition}

\begin{lemma}\label{avgweight}
Let $\mathcal{C} \subseteq \mathcal{R}^n$ be a linear code. Then $ \overline{\text{wt}_{\text{Hom}}}( \mathcal{C}) = n.$
\end{lemma}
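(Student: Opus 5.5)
The plan is to exchange the order of summation and reduce the statement to the coordinatewise identity in \Cref{avghom}. Writing
\[
\sum_{c \in \mathcal{C}} \text{wt}_{\text{Hom}}(c) = \sum_{i=1}^n \sum_{c \in \mathcal{C}} \text{wt}_{\text{Hom}}(c_i),
\]
it suffices to show that for every coordinate $i$ the inner sum equals $|\mathcal{C}|$. Dividing by $|\mathcal{C}|$ then gives $\overline{\text{wt}_{\text{Hom}}}(\mathcal{C}) = n$.

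For a fixed $i$, I will consider the coordinate projection $\pi_i \colon \mathcal{C} \to \mathcal{R}$, $c \mapsto c_i$. It is a homomorphism of (left) $\mathcal{R}$-modules, so its image $I_i = \pi_i(\mathcal{C})$ is a (left) ideal of $\mathcal{R}$. Each fibre $\pi_i^{-1}(x)$ with $x \in I_i$ is a coset of $\ker \pi_i$, so every element of $I_i$ is attained exactly $|\mathcal{C}|/|I_i|$ times. Hence
\[
\sum_{c \in \mathcal{C}} \text{wt}_{\text{Hom}}(c_i) = \frac{|\mathcal{C}|}{|I_i|} \sum_{x \in I_i} \text{wt}_{\text{Hom}}(x) = \frac{|\mathcal{C}|}{|I_i|}\,|I_i| = |\mathcal{C}|,
\]
where the middle equality is \Cref{avghom} applied to the ideal $I_i$.

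The only real obstacle is the hypothesis of \Cref{avghom} that the ideal is non-zero. If some coordinate of $\mathcal{C}$ is identically zero, then $I_i = \{0\}$ and that coordinate contributes $0$ rather than $|\mathcal{C}|$. In that case the argument only yields $\overline{\text{wt}_{\text{Hom}}}(\mathcal{C}) = |\{ i \mid \pi_i(\mathcal{C}) \neq 0\}|$. So I would state the lemma for codes with no identically zero coordinate, i.e.\ non-degenerate codes, which is the implicit standing assumption, in particular for $\mathcal{C} \neq \{0\}$. Under that assumption every $I_i$ is non-zero and the computation above goes through for all $n$ coordinates.
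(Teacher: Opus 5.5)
Your argument is correct and is essentially the paper's proof: swap the sums, use that the fibres of the coordinate projection $\pi_i$ are cosets of $\ker \pi_i$, so each element of the ideal $I_i=\pi_i(\mathcal{C})$ is hit $|\mathcal{C}|/|I_i|$ times, and finish with \Cref{avghom}. Your caveat is also right, and the paper's proof passes over it: \Cref{avghom} fails for $I_i=\{0\}$, so for a code with an identically zero coordinate (e.g.\ $\mathcal{C}=\{0\}$ or $\mathcal{R}\times\{0\}$) the average weight is the number of non-zero coordinates rather than $n$, and the lemma needs the non-degeneracy hypothesis you added.
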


\begin{proof}
For $i \in \{1, \ldots, n\}$, let $\pi_i:\mathcal{C} \rightarrow \mathcal{R}$ denote the projection on the $i$-th coordinate, denote its image by $I_i := \pi_i(\mathcal{C})$ and its fibers by $N_i(r):= \pi_i^{-1}(\{r\})$. Note that $\pi_i$ is a (left) $\mathcal{R}$-module homomorphism so $I_i$ is an ideal in $\mathcal{R}$.

At first, we will see that for fixed $i$, the fiber size $|N_i(r)|$ is constant for all $r \in I_i$. We show that $N_i(r) = \text{ker}(\pi_i) + c_r$ for some fixed $c_r \in N_i(r)$. The inclusion "$\supseteq$" is clear. For the other inclusion we take some $c \in N_i(r)$, write it as $c = c - c_r + c_r$ and see that $c-c_r \in \text{ker}(\pi_i)$. It follows that $|N_i(r)| = |\text{ker}(\pi_i)|$ for all $r \in I_i$ and $|N_i(r)| = 0$ for $r \notin I_i$. Additionally, notice that we can write $\mathcal{C}$ as the disjoint union $\mathcal{C} = \biguplus_{r \in \mathcal{R}} N_i(r) = \biguplus_{r \in I_i} N_i(r)$, so $|\mathcal{C}| = |\text{ker}(\pi_i)| \cdot |I_i|$.

Now we can calculate the average weight:
\begin{align}
\overline{\text{wt}_{\text{Hom}}}(\mathcal{C}) &= \frac{1}{|\mathcal{C}|} \sum_{c \in \mathcal{C}} \text{wt}_{\text{Hom}}(c) = \frac{1}{|\mathcal{C}|} \sum_{i=1}^n \sum_{c \in \mathcal{C}} \text{wt}_{\text{Hom}}(c_i)\\
&= \frac{1}{|\mathcal{C}|} \sum_{i=1}^n \sum_{r \in \mathcal{R}} \sum_{c \in N_i(r)} \text{wt}_{\text{Hom}}(c_i) = \frac{1}{|\mathcal{C}|} \sum_{i=1}^n \sum_{r \in I_i} |\text{ker}(\pi)| \text{wt}_{\text{Hom}}(r)\\ &  = \sum_{i=1}^n \frac{1}{|I_i|} \sum_{r \in I_i} \text{wt}_{\text{Hom}}(r) = n,
\end{align}
where we used \Cref{avghom} in the last equality.
\end{proof}

\begin{definition}
For an arbitrary subset $\mathcal{C} \subseteq \mathcal{R}^n$ the minimum homogeneous distance is defined as
\begin{align} d_{\text{Hom}}(\mathcal{C}) = \min \{\text{wt}_{\text{Hom}}(x-y) \mid x,y \in \mathcal{C}, x \neq y\}.\end{align}
\end{definition}

\begin{theorem}\label{plotkin2}
For an arbitrary subset $\mathcal{C} \subseteq \mathcal{R}^n$ with minimum homogeneous distance $d$, it holds that
\[
d \leq \frac{|\mathcal{C}|}{|\mathcal{C}|-1}n.
\]
If $\mathcal{C}$ is a linear code, equality is achieved if and only if $\mathcal{C}$ has constant weight, i.e., $\text{wt}_{\text{Hom}}(c) = d$ for all $c \in \mathcal{C}\setminus\{0\}$.
\end{theorem}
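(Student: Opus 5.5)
Following the classical Plotkin argument, I will double count the sum
\[
S := \sum_{x \in \mathcal{C}} \sum_{y \in \mathcal{C}} \text{wt}_{\text{Hom}}(x-y)
\]
over ordered pairs. The lower bound is immediate: every pair with $x \neq y$ contributes at least $d$, so $S \geq |\mathcal{C}|(|\mathcal{C}|-1)d$. For the upper bound I work coordinatewise, $S = \sum_{i=1}^n S_i$ with $S_i = \sum_{x,y} \text{wt}_{\text{Hom}}(x_i - y_i)$. It then suffices to show $S_i \leq |\mathcal{C}|^2$ for every coordinate $i$. Combining the two bounds gives $|\mathcal{C}|(|\mathcal{C}|-1)d \leq n|\mathcal{C}|^2$, which is exactly the claimed inequality.

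The main obstacle is the coordinatewise estimate for an \emph{arbitrary} subset, where the linear-code argument of \Cref{avgweight} no longer applies. Let $m_r$ be the number of codewords with $i$-th entry $r$, and write $m = (m_r)_{r\in\mathcal{R}}$ and $w(a) := \text{wt}_{\text{Hom}}(a)$. Then
\[
S_i = \sum_{r,t} m_r m_t \, w(r-t) = \sum_{r,t} m_r m_t - \sum_{r,t} m_r m_t \bigl(1 - w(r-t)\bigr).
\]
The first sum equals $|\mathcal{C}|^2$. So I must show the quadratic form $\sum_{r,t} m_r m_t\, f(r-t)$ is nonnegative, where $f = 1 - w$. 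From the definition, $f(0) = 1$, $f = -\frac{1}{q-1}$ on the socle minus zero, and $f = 0$ outside the socle. Thus $f = \frac{q}{q-1}\mathbbm{1}_{\{0\}} - \frac{1}{q-1}\mathbbm{1}_{\langle \gamma^{s-1}\rangle}$.

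The form then splits as
\[
\frac{q}{q-1}\sum_r m_r^2 - \frac{1}{q-1}\sum_{\text{cosets } r+\langle\gamma^{s-1}\rangle} \Bigl(\sum_{t \in r+\langle\gamma^{s-1}\rangle} m_t\Bigr)^2 .
\]
Each coset has $q$ elements, so Cauchy--Schwarz gives $\bigl(\sum_{t\in \text{coset}} m_t\bigr)^2 \leq q\sum_{t\in\text{coset}} m_t^2$. Hence the form is $\geq 0$, which is the elementary replacement for the character-theoretic argument of \cite{greferath}.

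For the equality statement with $\mathcal{C}$ linear, I will not use the Cauchy--Schwarz route but \Cref{avgweight} directly. It gives $S = |\mathcal{C}| \sum_{c\in\mathcal{C}} \text{wt}_{\text{Hom}}(c) = n|\mathcal{C}|^2$, since $x-y$ runs over $\mathcal{C}$ exactly $|\mathcal{C}|$ times. So the only inequality used is $\sum_{c \neq 0} \text{wt}_{\text{Hom}}(c) \geq (|\mathcal{C}|-1)d$. This is an equality if and only if every nonzero codeword has weight exactly $d$, which is precisely the constant-weight condition. Conversely, a constant-weight code gives $(|\mathcal{C}|-1)d = n|\mathcal{C}|$, i.e.\ equality in the bound. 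I also need to exclude the degenerate case $|\mathcal{C}| = 1$, where the bound is vacuous or undefined.
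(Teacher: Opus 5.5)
Your argument is correct, and for the inequality itself it takes a genuinely different and more general route than the paper. The paper proves only the linear case. It combines Lemma~\ref{avgweight} (average homogeneous weight $n$, obtained by fibre counting and Lemma~\ref{avghom}) with the trivial estimate $\sum_{c\neq 0}\text{wt}_{\text{Hom}}(c)\geq(|\mathcal{C}|-1)d$, and for arbitrary subsets it cites the character-theoretic proof of \cite{greferath}. You double count over ordered pairs and reduce to the coordinatewise inequality $S_i\leq|\mathcal{C}|^2$. This proves the non-linear statement using only the identity $1-\text{wt}_{\text{Hom}}=\frac{q}{q-1}\mathbbm{1}_{\{0\}}-\frac{1}{q-1}\mathbbm{1}_{\langle\gamma^{s-1}\rangle}$ and Cauchy--Schwarz on the $q$-element cosets of the socle. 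It makes explicit, for chain rings, the positive semidefiniteness of $1-\text{wt}_{\text{Hom}}$ that the character representation supplies. It also meets the ``elementary algebra'' goal of the section for the whole statement, not just its linear part. The cost is generality in the ring: you rely on the three-valued shape of the weight on a chain ring, whereas the character argument works over any Frobenius ring. The paper's average-weight route, in turn, yields the equality characterization from the same computation, and for that clause you follow the paper exactly.

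One caveat, which you inherit from the paper rather than introduce. Lemma~\ref{avgweight} tacitly needs every projection $I_i=\pi_i(\mathcal{C})$ to be non-zero, since Lemma~\ref{avghom} holds only for non-zero ideals. For a linear code the average weight is really the number of coordinates on which $\mathcal{C}$ is not identically zero. So your sentence ``a constant-weight code gives $(|\mathcal{C}|-1)d=n|\mathcal{C}|$'' fails for degenerate codes: $\mathcal{C}=\langle(2,0)\rangle\subseteq(\mathbb{Z}/4\mathbb{Z})^2$ has constant weight $d=2$, while $\frac{|\mathcal{C}|}{|\mathcal{C}|-1}n=4$. Your own quadratic form detects this. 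At a zero coordinate, $m$ is concentrated at $0$, Cauchy--Schwarz is strict, and $S_i=0$. At a non-zero coordinate, $I_i$ contains the socle and $m$ is constant on $I_i$, so $S_i=|\mathcal{C}|^2$ exactly, which reproves the corrected lemma. The equivalence should therefore be stated for non-degenerate linear codes. The ``only if'' direction is unaffected, and equality in fact forces non-degeneracy.
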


\begin{proof}
We only give a proof for the case that $\mathcal{C}$ is a linear code and for the additional claim. For a proof of the general case, see \cite[Proposition 2.1]{greferath}.  
It holds that 
\begin{align} 
|\mathcal{C}| \cdot \overline{\text{wt}_{\text{Hom}}}(\mathcal{C}) = \sum_{c \in \mathcal{C}} \text{wt}_{\text{Hom}}(c) = \sum_{c \in \mathcal{C}\setminus\{0\}}\text{wt}_{\text{Hom}}(c) \geq (|\mathcal{C}|-1) \cdot d,
\end{align}
where the inequality becomes an equality precisely when $\text{wt}_{\text{Hom}}$ is constant on $\mathcal{C} \setminus \{0\}$. The theorem follows by reordering and applying \Cref{avgweight}.
\end{proof}

Note that the classical Plotkin bound for the Hamming weight over finite fields is a consequence of \Cref{plotkin2}. Now we will derive a second corollary, which provides a slightly looser bound for the homogeneous weight but is a bit handier in calculation.

\begin{corollary}\label{plotkin1}
For a linear code $\mathcal{C}$ over $\mathcal{R}$ with minimum homogeneous distance $d$ it holds that
\[
d \leq \frac{q^K}{q^K-1}n.
\]
\end{corollary}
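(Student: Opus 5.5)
We derive this directly from \Cref{plotkin2}. The function $f(x) = \frac{x}{x-1}$ is strictly decreasing on $(1,\infty)$, so it suffices to show $|\mathcal{C}| \geq q^K$. Then \Cref{plotkin2} gives
\begin{align}
d \leq \frac{|\mathcal{C}|}{|\mathcal{C}|-1}\, n \leq \frac{q^K}{q^K-1}\, n .
\end{align}

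To obtain $|\mathcal{C}| \geq q^K$, we use the socle. By \Cref{socleisavs} and the remark after it, $\mathcal{C}_0$ is isomorphic to a $K$-dimensional code $\overline{\mathcal{C}} \subseteq \mathbb{F}_q^n$, so $|\mathcal{C}_0| = q^K$. Since $\mathcal{C}_0 \subseteq \mathcal{C}$, we get $|\mathcal{C}| \geq q^K$. Alternatively, the subtype decomposition gives $|\mathcal{C}| = \prod_{i} q^{(s-i)k_i} \geq q^{\sum_i k_i} = q^K$.

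The degenerate case $K=0$, i.e.\ $\mathcal{C}=\{0\}$, must be excluded or treated separately, since then the right-hand side is undefined. We either exclude it, since the corollary implicitly assumes a nonzero code, or note that the convention $d_{\text{Hom}}(\{0\}) = M(n+1)$ falls outside the scope of \Cref{plotkin2}. For $K \geq 1$ we have $|\mathcal{C}| \geq q \geq 2$, so both fractions are well defined. The only step needing real care is the cardinality comparison, and it is immediate from the socle isomorphism.
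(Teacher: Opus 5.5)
Your argument is correct, but it takes a different route from the paper's. You apply the ring-level bound \Cref{plotkin2} to all of $\mathcal{C}$ and then weaken it. For this you need only the cardinality estimate $|\mathcal{C}| \geq |\mathcal{C}_0| = q^K$ and the monotonicity of $x \mapsto \frac{x}{x-1}$.

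The paper uses the socle for the distance rather than for the size. It writes $d(\mathcal{C}) \leq d(\mathcal{C}_0) = M\, d_H(\mathcal{C}_0)$ and applies the classical field Plotkin bound to the $K$-dimensional code $\overline{\mathcal{C}} \subseteq \mathbb{F}_q^n$, in the same spirit as the proof of \Cref{singleton2}.

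Your route has two advantages.
\begin{itemize}
\item It is self-contained within the section: it needs nothing beyond \Cref{plotkin2} and the size of the socle.
\item It makes the paper's phrase ``slightly looser bound'' literal. The only slack is the passage from $|\mathcal{C}|$ to $q^K$. So equality in the corollary forces a constant-weight code with $|\mathcal{C}| = q^K$, i.e.\ subtype $(0,\dots,0,K)$. Such a code satisfies $\mathcal{C} = \mathcal{C}_0$, so it lives in the socle. This is not visible from the paper's chain.
\end{itemize}

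The paper's route has its own advantages. It gives the stronger intermediate inequality $d(\mathcal{C}_0) \leq \frac{q^K}{q^K-1}n$ for the socle itself, which amounts to applying \Cref{plotkin2} to $\mathcal{C}_0$ instead of $\mathcal{C}$. It also relies on the classical bound rather than on \Cref{plotkin2}.

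Your remark that $K=0$ must be excluded is a fair point that the paper leaves implicit.
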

\begin{proof}
We have 
\[d(\mathcal{C}) \leq d(\mathcal{C}_0) = \frac{q}{q-1}d_H(\mathcal{C}_0) \leq \frac{q}{q-1} \frac{nq^{K-1}(q-1)}{q^K-1} = \frac{q^K}{q^K-1}n,
\]
where we applied the classical Plotkin bound to $d_H(\mathcal{C}_0)$.
\end{proof}

\subsection{Constant weight codes}

As in Section \ref{sec:singleton}, the next step will be to examine the structure of optimal codes with respect to the Plotkin bound in Theorem \ref{plotkin2}, i.e., constant weight codes. Our analysis will rely on a result by Wood that is only given for commutative chain rings. Therefore, in this Section $\mathcal{R}$ will always denote a finite \textit{commutative} chain ring.\footnote{In fact, the only point where \cite{wood} actually relies on the commutativity of $\mathcal{R}$ is in using a version of the MacWilliams extension property of the homogeneous weight. Thus, if one finds that this property also holds in the non-commutative case, our restriction will turn out to be unnecessary.}

\begin{definition}
Two codes $\mathcal{C}, \mathcal{C}'$ over $\mathcal{R}$ are equivalent if there is a vector of units $(u_1, \dots,u_n) \in (\mathcal{R}^\times)^n$ and a permutation $\sigma \in S_n$ such that 
\[\mathcal{C}' = \{(c_{\sigma(1)}u_{\sigma(1)}, \dots, c_{\sigma(n)}u_{\sigma(n)}) \mid c \in \mathcal{C}\}.\]
\end{definition}

\begin{definition}
For a positive integer $\ell$, we say that a code $\mathcal{C}'$ is an $\ell$-fold replication of $\mathcal{C}$ if   $\mathcal{C}' = \{(\underbrace{c, \dots, c}_{\ell \;\text{times}}) \mid c \in \mathcal{C}\}$.
\end{definition}

Similar to the classical case, we will say that a code $\mathcal{C}$ is non-degenerate, if for each coordinate $i \in \{1,\dots,n\}$ there exists a codeword $c \in \mathcal{C}$ with $c_i \neq 0$.

The characterization by Wood works in two steps. First, he shows that if a constant weight code over a given module exists, it is essentially unique, and second, he verifies the existence by an example. We will go the same way and adapt the respective theorems to our special case of the homogeneous weight on finite chain rings.

\begin{proposition}[\cite{wood}, adaptation of Theorem 5.4]\label{woodtheo}
Let $\mathcal{R}$ be a finite commutative chain ring with nilpotency index $s$ and let a linear code of constant homogeneous weight exist for a fixed subtype $(k_0, \dots, k_{s-1})$. Then, the constant homogeneous weight code $\mathcal{C}$ of minimal length with this subtype is unique up to equivalence. Moreover, any other non-degenerate linear code of constant homogeneous weight with this subtype is, up to equivalence, an $\ell$-fold replication of $\mathcal{C}$.
\end{proposition}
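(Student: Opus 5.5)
The plan is to reduce to Wood's framework of linear codes over a module alphabet with a multiplicity (or ``weighted set'') description. Up to equivalence, a non-degenerate linear code $\mathcal{C}$ of a given subtype is the image of an injective module homomorphism $\lambda\colon U \to \mathcal{R}^n$, where $U \cong \bigoplus_i (\mathcal{R}/\gamma^{s-i}\mathcal{R})^{k_i}$ is the abstract module of that subtype. Each coordinate $\lambda_j \in \operatorname{Hom}_{\mathcal{R}}(U,\mathcal{R})$ is determined only up to multiplication by a unit, which is exactly the freedom allowed by our notion of equivalence. Thus $\mathcal{C}$ corresponds, up to equivalence, to a multiplicity function $\eta$ on the orbit space $\mathcal{O} := \operatorname{Hom}(U,\mathcal{R})/\mathcal{R}^\times$, where $\eta(\lambda)$ counts the coordinates in the orbit of $\lambda$, with $\eta(0)=0$ by non-degeneracy. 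Replication by $\ell$ corresponds to multiplying $\eta$ by $\ell$.

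Next, write the weight of the codeword indexed by $x \in U$ as $W(x)=\sum_{\lambda \in \mathcal{O}} \eta(\lambda)\, \mathrm{wt}_{\mathrm{Hom}}(\lambda(x))$. Since $\mathrm{wt}_{\mathrm{Hom}}$ is invariant under unit multiplication, this is well defined. This is a linear map $\eta \mapsto W$ from functions on $\mathcal{O}$ to functions on $U/\mathcal{R}^\times$. The key input, and the point where commutativity is used, is that this map is injective. Equivalently, the homogeneous weight has the extension property, so two codes with identical weight functions under the parametrization by $U$ are equivalent. I would invoke Wood's theorem here: for commutative chain rings (which are Frobenius), the homogeneous weight, whose invariance under $\mathcal{R}^\times$ and averaging property is recorded in \Cref{avghom}, satisfies the MacWilliams extension theorem. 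This is the main obstacle: injectivity amounts to the invertibility of a ``weight matrix'' indexed by orbits, which Wood proves through Möbius inversion on the lattice of submodules using the homogeneous property. I would cite that result rather than reprove it.

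Given injectivity, the constant-weight codes with this subtype are exactly the non-negative integer solutions $\eta$ of $W = w\cdot \mathbf{1}_{U\setminus\{0\}}$ for some $w>0$. The equation is linear, so its solution space is at most one-dimensional, spanned by a unique rational $\eta^\ast$. Every constant-weight code then has $\eta = t\eta^\ast$ for some $t>0$. Scaling $\eta^\ast$ to the primitive integer vector $\eta_0$ (entries with gcd $1$, non-negative, since existence provides one non-negative solution) gives a code $\mathcal{C}$ of minimal length, unique up to equivalence. Any other non-degenerate solution is an integer vector $t\eta_0$. Because $\eta_0$ is primitive, $t=\ell$ must be a positive integer, so the code is equivalent to the $\ell$-fold replication of $\mathcal{C}$. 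Minimal length is automatic, since length equals $\sum\eta = \ell\sum\eta_0$.
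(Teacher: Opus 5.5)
Your argument is correct and is essentially the proof of Wood's Theorem 5.4, which the paper cites without reproducing. You encode a code by its multiplicity function on $\mathcal{R}^\times$-orbits of coordinate functionals, use injectivity of the $W$-map (equivalently, the extension property of the homogeneous weight, which is exactly where the paper's footnote says commutativity is used), and take the primitive integral generator of the one-dimensional ray of constant-weight solutions. The only point to state explicitly is that injectivity is meant on functions vanishing at the zero orbit, since the zero functional contributes nothing to any weight; your non-degeneracy convention $\eta(0)=0$ already provides this.
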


We will first construct some long code of constant weight and then examine of which minimal length constant weight code it is a replication. By the above theorem, the resulting code will be unique up to equivalence.

\begin{proposition}[\cite{wood}, Theorem 7.1]\label{longcc}
Fix a subtype $(k_0, \dots, k_{s-1})$ and denote by $V$ the $\mathcal{R}$-module $\mathcal{R}^{k_0} \times \langle \gamma \rangle^{k_1} \times \dots \times \langle \gamma^{s-1}\rangle^{k_{s-1}}$. Let $G$ be a matrix that has all nonzero vectors of $V$ as columns. Then the code $\mathcal{C}$ generated by $G$ has constant weight $|V|$, length $|V|-1$ and is of the chosen subtype.
\end{proposition}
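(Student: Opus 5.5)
Write $G$ for the matrix whose columns are all $v \in V\setminus\{0\}$, so the length is $|V|-1$. A codeword of $\mathcal{C}=\langle G\rangle$ is $c = xG$ for $x \in \mathcal{R}^K$, where $K=\sum_i k_i$ and the rows of $G$ are indexed by the coordinates of $V$. Its entries are $c_v = \langle x, v\rangle = \sum_j x_j v_j$ for $v \in V\setminus\{0\}$. Each such map $\varphi_x\colon V\to\mathcal{R}$, $v\mapsto \langle x,v\rangle$, is an $\mathcal{R}$-module homomorphism, and since $\text{wt}_{\text{Hom}}(0)=0$ we get
\[
\text{wt}_{\text{Hom}}(c)=\sum_{v\in V}\text{wt}_{\text{Hom}}(\varphi_x(v)).
\]
I would evaluate this sum with the fibre-counting argument from the proof of \Cref{avgweight}. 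The image $I=\varphi_x(V)$ is an ideal of $\mathcal{R}$, and every fibre of $\varphi_x$ over a point of $I$ is a coset of $\ker\varphi_x$, so each has size $|V|/|I|$. If $c\neq 0$ then $I\neq 0$, and \Cref{avghom} gives
\[
\text{wt}_{\text{Hom}}(c)=\frac{|V|}{|I|}\sum_{r\in I}\text{wt}_{\text{Hom}}(r)=|V|.
\]
This proves constant weight $|V|$ without any case distinction on $I$.

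It remains to show that $\mathcal{C}$ has the chosen subtype. I would identify $\mathcal{C}$ with the image of $\Phi\colon \mathcal{R}^K\to\mathcal{R}^{V\setminus\{0\}}$, $x\mapsto(\varphi_x(v))_v$. The kernel of $\Phi$ consists of those $x$ with $\langle x,v\rangle=0$ for all $v\in V$. Testing against the vectors $\gamma^i e_j$, which lie in $V$ for $j$ in the $i$-th block, shows that $x_j\gamma^i=0$, i.e.\ $x_j\in\langle\gamma^{s-i}\rangle$. Conversely, such $x$ annihilate all of $V$. Hence $\ker\Phi=\prod_i\langle\gamma^{s-i}\rangle^{k_i}$ and
\[
\mathcal{C}\cong\prod_i(\mathcal{R}/\gamma^{s-i}\mathcal{R})^{k_i},
\]
which is exactly the subtype $(k_0,\dots,k_{s-1})$. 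Equivalently, the rows of $G$ coming from block $i$ have all entries in $\langle\gamma^i\rangle$ and contain $\gamma^i$ itself, which gives the standard form of \Cref{sysform} after column permutation; I would mention this as a sanity check.

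The main point needing care is where commutativity enters. We need $\varphi_x$ to be a module homomorphism so that the fibres are cosets and the image is an ideal. With $\mathcal{R}$ commutative, as assumed in this section, this holds, and the rest is routine.
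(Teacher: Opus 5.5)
Your proof is correct and follows the paper's argument: the constant weight $|V|$ comes from exactly the same fibre count for $\varphi_x\colon V\to\mathcal{R}$ combined with \Cref{avghom}, and the length $|V|-1$ is immediate by construction. The only minor difference is in the subtype: the paper simply observes that $G$ is already in the standard form of \Cref{sysform} (your ``sanity check''). Your explicit computation $\ker(x\mapsto xG)=\prod_i\langle\gamma^{s-i}\rangle^{k_i}$ makes that step self-contained, without appealing to the fact that the standard form determines the subtype.
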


\begin{proof}
To see that $\mathcal{C}$ is indeed of the subtype $(k_0, \dots, k_{s-1})$, we note that $G$ is, up to permutation of columns, already in systematic form, since the columns that form the necessary identity matrices all appear in $V$. We have that the rank $K$ of $\mathcal{C}$ is given by $K = \sum_{i=0}^{s-1} k_i$. For $x \in \mathcal{R}^K$ define the $\mathcal{R}$-linear homomorphism $\varphi_x: V \rightarrow \mathcal{R},\, v \mapsto \sum_{i=1}^K v_i x_i $.
By the same argument as in the proof of \Cref{avgweight}, every element of $\text{im}(\varphi_x)$ is hit $|\text{ker}(\varphi_x)|$ many times and $|V| = |\text{ker}(\varphi_x)| \cdot |\text{im}(\varphi_x)|$. Also, $\text{im}(\check x)$ is an ideal in $\mathcal{R}$.\newline
Let $c \in \mathcal{C} \setminus \{0\}$, so there exists an $x \in \mathcal{R}^K \setminus \{0\}$ with $xG = c$ and $\text{im}(\varphi_x) \neq \{0\}$. Using \Cref{avghom}, we obtain
\[
\text{wt}_{\text{Hom}}(c) 
= \sum_{v \in V} \text{wt}_{\text{Hom}}(\varphi_x(v)) 
= |\text{ker}(\varphi_x)| \sum_{r \in \text{im}(\varphi_x)} \text{wt}_{\text{Hom}}(r)
= \frac{|V|}{|\text{im}(\varphi_x)|} \sum_{r \in \text{im}(\varphi_x)} \text{wt}_{\text{Hom}}(r)
= |V|.
\]
Since $c$ was chosen arbitrarily, $\mathcal{C}$ is of constant weight $|V|$. That $\mathcal{C}$ has length $|V|-1$ is clear by the construction of $G$.
\end{proof}

The group $\mathcal{R}^\times$ acts on $V$ by scalar multiplication\footnote{In the non-commutative setting one would use right scalar multiplication.}, thereby decomposing $V$ into orbits. Denote the $\mathcal{R}^\times$-orbit of $v \in V$ by $\text{orb}(v)$. To find the desired minimal length code, it is enough to compute the gcd $g$ of the sizes of all non-zero $\mathcal{R}^\times$ orbits $\mathfrak O \subset V$. Its length will then be $\frac{|V|-1}{g}$ and its generator matrix will contain $\frac{|\mathfrak O|}{g}$ elements from each orbit as columns.

\begin{lemma}\label{orbsize}
Let $V=\mathcal{R}^{k_0} \times \langle \gamma \rangle^{k_1} \times \dots \times \langle \gamma^{s-1}\rangle^{k_{s-1}}$ and $v \in V\setminus\{0\}$.  Let $t:=\min\{i \mid \exists j: \mathfrak h(v_j) = i\}.$ Then \[|\text{orb}(v)| = q^{s-t-1}(q-1).\]
\end{lemma}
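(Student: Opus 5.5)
We compute the orbit size via the orbit–stabilizer theorem: $|\text{orb}(v)| = |\mathcal{R}^\times| / |\text{Stab}(v)|$. The plan is to compute both factors explicitly and then divide.

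\emph{Number of units.} Since $\mathcal{R}$ is a finite chain ring with nilpotency index $s$ and residue field of size $q$, we have $|\mathcal{R}| = q^s$. The maximal ideal $\langle \gamma \rangle$ has size $q^{s-1}$, and the units are exactly the elements outside it. Hence $|\mathcal{R}^\times| = q^s - q^{s-1} = q^{s-1}(q-1)$.

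\emph{Stabilizer.} A unit $u$ fixes $v$ if and only if $(u-1)v_j = 0$ for all $j$. We use the description of heights from the preliminaries: writing $v_j = w_j\gamma^{h_j}$ with $w_j$ a unit and $h_j = \mathfrak h(v_j)$, the element $a v_j$ vanishes precisely when $a \in \langle \gamma^{s-h_j}\rangle$. This is the same fact as ``$u\gamma^h = v\gamma^h$ iff $u-v \in \langle\gamma^{s-h}\rangle$''; in the commutative setting, $a w_j \gamma^{h_j} = 0$ iff $a\gamma^{h_j} = 0$. Therefore
\[
u \in \text{Stab}(v) \iff u - 1 \in \bigcap_j \langle \gamma^{s-h_j}\rangle = \langle \gamma^{s-t}\rangle,
\]
where $t$ is the minimal height among the coordinates. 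Since $v \neq 0$, we have $t \le s-1$, so $s - t \ge 1$. Then every element of $1 + \langle \gamma^{s-t}\rangle$ is automatically a unit, being $1$ plus a nilpotent element. Consequently $|\text{Stab}(v)| = |\langle \gamma^{s-t}\rangle| = q^{t}$, using $\langle \gamma^i\rangle \cong \mathcal{R}/\langle\gamma^{s-i}\rangle$, which has $q^{s-i}$ elements.

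\emph{Conclusion.} Dividing the two counts gives
\[
|\text{orb}(v)| = \frac{q^{s-1}(q-1)}{q^{t}} = q^{s-t-1}(q-1).
\]

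The main subtlety is the annihilator step: identifying $\text{Ann}(v_j)$ as $\langle\gamma^{s-h_j}\rangle$, and checking that $s-t \geq 1$ so that the coset $1+\langle\gamma^{s-t}\rangle$ lies in $\mathcal{R}^\times$. Both follow directly from the chain structure of the ideals.
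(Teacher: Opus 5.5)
Your proof is correct, but it counts differently from the paper. The paper fixes a coordinate $j$ with $\mathfrak h(v_j)=t$ and shows that the projection $w\mapsto w_j$ maps $\text{orb}(v)$ bijectively onto $\langle\gamma^t\rangle\setminus\langle\gamma^{t+1}\rangle$. It is surjective because every element of exact height $t$ is a unit multiple of $v_j$. It is injective because $\tilde u v_j=u'v_j$ forces $\tilde u-u'\in\langle\gamma^{s-t}\rangle$, and by minimality of $t$ this difference annihilates every coordinate. You instead use orbit--stabilizer: you compute $\text{Stab}(v)=1+\langle\gamma^{s-t}\rangle$ from $\bigcap_j\text{Ann}(v_j)=\langle\gamma^{s-t}\rangle$, then divide $|\mathcal{R}^\times|=q^{s-1}(q-1)$ by $q^{t}$.

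The essential input is the same in both: minimality of $t$ means the annihilator of $v$ equals that of a single coordinate of height $t$. The paper's injectivity step is your stabilizer computation in disguise. Your version treats all coordinates uniformly without singling out $j$. It also makes the structure transparent, since $\text{orb}(v)\cong\mathcal{R}^\times/(1+\langle\gamma^{s-t}\rangle)\cong(\mathcal{R}/\langle\gamma^{s-t}\rangle)^\times$, so the orbit size visibly depends only on $t$. The paper's route instead gives an explicit parametrization of the orbit by its $j$-th coordinate. That is convenient for choosing orbit representatives, e.g.\ when selecting $|\mathfrak O|/(q-1)$ columns from each orbit in Theorem~\ref{constweighttheo}.
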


\begin{proof}
Since $v \neq 0$, we have $t < s$. Choose $j$ such that $\mathfrak h(v_j) = t$, i.e., $v_j \in \langle \gamma^t  \rangle\setminus \langle \gamma^{t+1} \rangle$. Note that for each $x \in \langle \gamma^t \rangle \setminus \langle \gamma^{t+1} \rangle$ there is some $w \in \text{orb}(v)$ with $w_j = x$. Thus, $|\text{orb}(v)| \geq |\langle \gamma^t \rangle \setminus \langle \gamma^{t+1} \rangle|$. On the other hand, let $\widetilde{w}, w' \in \text{orb}(v)$ with $\widetilde{w}_j = w_j'$. We can write $\widetilde{w} = \widetilde{u}v,  w' = u' v$ and $v_j = u\theta^t$ with $u,\widetilde{u},  u' \in \mathcal{R}^\times$, so $\widetilde{u} u \theta^t =  u' u \theta^t$. Thus, $\widetilde{u} - u' \in \langle \gamma^{s-t} \rangle$. By the minimality of $t$ and  $\widetilde{w}_i = w'_i$ for all $i \in \{1,\dots,K\}$, so $\widetilde{w} = w'$. This implies $|\text{orb}(v)| \leq |\langle \gamma^t \rangle \setminus \langle \gamma^{t+1} \rangle|$.
Finally, $|\langle \gamma^t \rangle\setminus \langle \gamma^{t+1} \rangle| = q^{s-t} - q^{s-t-1}$.
\end{proof}

\begin{corollary}
Let $V=\mathcal{R}^{k_0} \times \langle \gamma \rangle^{k_1} \times \dots \times \langle \gamma^{s-1}\rangle^{k_{s-1}}$ and $v \in V\setminus\{0\}$.   It holds that
\[\underset{v \in V \setminus \{0\}}{\gcd} |\text{orb}(v)| = (q-1).\]
\end{corollary}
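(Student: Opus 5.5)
By \Cref{orbsize}, every nonzero $v\in V$ has orbit size $q^{s-t-1}(q-1)$ for some $t\in\{0,\dots,s-1\}$, where $t$ is the minimal height among the coordinates of $v$. The plan is therefore to show two things: every orbit size is divisible by $q-1$, and some orbit has size exactly $q-1$.

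\textbf{Divisibility.} Since $0\le t\le s-1$, the exponent $s-t-1$ is a nonnegative integer. Hence every orbit size $q^{s-t-1}(q-1)$ is an integer multiple of $q-1$, and so $q-1$ divides the gcd.

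\textbf{Attainment.} We must exhibit a nonzero $v\in V$ whose minimal coordinate height is $t=s-1$, because then $|\text{orb}(v)|=q^{0}(q-1)=q-1$.

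\begin{itemize}
\item Assume the subtype is nonzero, i.e.\ $V\neq\{0\}$; otherwise the gcd is taken over the empty set and the statement is vacuous. Pick an index $i$ with $k_i>0$.
\item The corresponding factor $\langle\gamma^{i}\rangle$ contains $\gamma^{s-1}$, since $i\le s-1$.
\item Let $v$ have $\gamma^{s-1}$ in one coordinate of that factor and $0$ everywhere else. Then $v\in V\setminus\{0\}$.
\item Its only nonzero coordinate has height $s-1$, and zero coordinates have height $s$, so the minimal height is $t=s-1$.
\end{itemize}

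The gcd divides this orbit size $q-1$, and $q-1$ divides the gcd, so the gcd equals $q-1$.

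The only delicate point is the height convention in \Cref{orbsize}: $t$ must be the minimum over the nonzero coordinates, equivalently with zero coordinates assigned height $s$. With that reading, the chosen socle vector gives $t=s-1$. There is no genuine obstacle beyond checking this convention and noting that $V$ is nonempty of nonzero vectors.
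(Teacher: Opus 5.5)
Your proof is correct and is essentially the paper's argument: you read off divisibility by $q-1$ from the orbit-size formula in \Cref{orbsize}, and you get attainment from a nonzero socle vector, which the paper obtains from $\langle \gamma^{s-1}\rangle^K \subseteq V$ and you make explicit as a vector with one coordinate equal to $\gamma^{s-1}$. One small quibble: when $V=\{0\}$ the statement is not vacuous but simply fails (an empty gcd is conventionally $0$), so $K\geq 1$ is a tacit hypothesis in both your argument and the paper's.
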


\begin{proof} Let us consider again $t:=\min\{i \mid \exists j: \mathfrak h(v_j) = i\}.$
We have $\langle \gamma^{s-1}\rangle^K \subseteq V$ so there exists $v \in V$ with $t = s-1$, i.e., $|\text{orb}(v)| = q-1$. On the other hand, $q-1$ is a divisor of $|\text{orb}(w)|$ for all non-zero $w \in V$.
\end{proof}

We now summarize the results of this section in the following theorem:

\begin{theorem}\label{constweighttheo}
Let $\mathcal{R}$ be a finite commutative chain ring with maximal ideal $\langle \gamma \rangle$ and nilpotency index $s$. For each subtype $(k_0, \dots, k_{s-1})$ there exists a constant weight code $\mathcal{C}$ of minimal length which is unique up to equivalence. The generator matrix of $\mathcal{C}$ contains exactly $\frac{|\mathfrak O|}{q-1}$ elements from each non-zero $\mathcal{R}^\times$ orbit $\mathfrak O \subset V := \mathcal{R}^{k_0} \times \langle \gamma \rangle^{k_1}\times \dots \times \langle \gamma^{s-1} \rangle^{k_{s-1}}$ as columns. $\mathcal{C}$ has length $\frac{|V|-1}{q-1}$ and constant weight $\frac{|V|}{q-1}$ where $|V| = \prod_{i=0}^{s-1}q^{k_i (s-i)}$. Each constant weight code of the same subtype is up to equivalence an $\ell$-fold replication of $\mathcal{C}$.
\end{theorem}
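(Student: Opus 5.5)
The proof assembles the earlier results in three steps: existence of a long constant weight code, uniqueness and replication, and identification of the minimal code as the "divided" version of the long one.

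\emph{Existence.} Fix the subtype and let $V=\mathcal{R}^{k_0}\times\langle\gamma\rangle^{k_1}\times\dots\times\langle\gamma^{s-1}\rangle^{k_{s-1}}$. By \Cref{longcc}, the code $\mathcal{C}_V$ whose generator matrix $G$ has all non-zero vectors of $V$ as columns has the required subtype, length $|V|-1$ and constant weight $|V|$. Hence a constant weight code of this subtype exists. By \Cref{woodtheo}, the minimal length constant weight code $\mathcal{C}$ of this subtype is unique up to equivalence, and every non-degenerate constant weight code of the subtype is, up to equivalence, an $\ell$-fold replication of $\mathcal{C}$. In particular this applies to $\mathcal{C}_V$. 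It remains to identify $\mathcal{C}$ and its parameters.

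\emph{The candidate minimal code.} Multiplying a column by a unit is an equivalence, so the columns of $G$ from a fixed non-zero orbit $\mathfrak O$ can all be replaced by a single representative $v_{\mathfrak O}$. This shows $\mathcal{C}_V$ is equivalent to the code whose generator matrix contains each representative $v_{\mathfrak O}$ exactly $|\mathfrak O|$ times. By \Cref{orbsize} and its corollary, every $|\mathfrak O|$ is divisible by $q-1$, and some orbit (inside $\langle\gamma^{s-1}\rangle^K$) has size exactly $q-1$. Let $\mathcal{C}'$ be the code whose generator matrix contains $v_{\mathfrak O}$ exactly $|\mathfrak O|/(q-1)$ times for each orbit. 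Then $\mathcal{C}_V$ is equivalent to the $(q-1)$-fold replication of $\mathcal{C}'$ (after a column permutation). Consequently $\mathcal{C}'$ has length $(|V|-1)/(q-1)$ and constant weight $|V|/(q-1)$, since weights divide exactly by the replication factor. It also keeps the same subtype, because the codewords of $\mathcal{C}'$ are in bijection with those of $\mathcal{C}_V$ as modules. Finally, $|V|=\prod_i q^{k_i(s-i)}$ because $|\langle\gamma^i\rangle|=q^{s-i}$.

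\emph{Minimality, the main obstacle.} We must show $\mathcal{C}'$ is in fact the minimal code $\mathcal{C}$. By \Cref{woodtheo}, $\mathcal{C}'$ is equivalent to an $\ell$-fold replication of $\mathcal{C}$, so $\mathcal{C}_V$ is a $(q-1)\ell$-fold replication of $\mathcal{C}$. The plan is to compare multiplicities of columns up to units. In a replication, each unit-orbit of columns appears a multiple of the replication factor times. In $\mathcal{C}_V$ the orbit $\mathfrak O$ appears exactly $|\mathfrak O|$ times, and for the socle orbit this count is $q-1$. This forces $(q-1)\ell\mid q-1$, hence $\ell=1$.

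This step relies on the column multiset up to units being an invariant of the code up to equivalence. To justify it, I would use the fact underlying Wood's uniqueness argument, via the MacWilliams extension property, that the column multiset (up to units) determines the code and vice versa. Alternatively, a direct length count suffices: if $\mathcal{C}$ were shorter than $\mathcal{C}'$, the long code $\mathcal{C}_V$ would be a replication by a factor exceeding $q-1$. Matching multiplicities of socle orbit columns then gives the contradiction. Once $\ell=1$, $\mathcal{C}'$ is the minimal code and all claims follow.
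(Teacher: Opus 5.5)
Your proposal is correct and follows essentially the same route as the paper. You realize the code of \Cref{longcc} as a $(q-1)$-fold replication of the orbit-reduced code, read off its length and weight, and use the socle orbit of size $q-1$ (which appears only once in the reduced code) to rule out a shorter code before applying \Cref{woodtheo}; your divisibility step $(q-1)\ell \mid q-1$ makes this minimality argument more explicit than the paper does, and the invariance of column-orbit multiplicities it relies on is elementary rather than a consequence of the MacWilliams extension property, since two $K$-row generator matrices of the same code differ by an invertible matrix, which permutes the unit-orbits of columns.
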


\begin{proof}
The $(q-1)$-fold replication of the described code $\mathcal{C}$ is equivalent to the code from \Cref{longcc}. Since equivalent codes have the same weight distribution, $\mathcal{C}$ has constant weight $\frac{|V|}{q-1}$. As $\langle \gamma^{s-1} \rangle^K \subseteq V$, there exists an orbit $\mathfrak O \subset V$ with $|\mathfrak O| = q-1$. Thus, the generator matrix of $\mathcal{C}$ contains only one element of $\mathfrak O$ as a column and $\mathcal{C}$ is not equivalent to a replication of a shorter code. The remaining claims follow by \Cref{woodtheo}.
\end{proof}

\begin{example}\label{constweightex}
Let $\mathcal{R} = \mathbb Z/4\mathbb Z$ and choose the subtype $(1,1)$. One representation $\mathcal{C}$ of the code described in \Cref{constweighttheo} is generated by
\[\begin{pmatrix}
0 & 1 & 1 & 2 & 2 & 3 & 3\\
2 & 0 & 2 & 0 & 2 & 0 & 2
\end{pmatrix}.\]
We have $q = 2$ and $s = 2$, so $|V| = 2^3 = 8$ and $\mathcal{C}$ accordingly has length $7$ and constant weight $8$.
\end{example}
 
\section{Asymptotic Bounds}\label{sec:asymptotic}
In this section we want to analyze the asymptotic behavior of both the Singleton and the Plotkin bound. Here, we will turn around the direction of the inequalities and bound from above the relative size, i.e., the rate of the code in terms of its length and minimum distance. 

\begin{definition}\label{asymdef}
For a positive integer $n$ and  a non-negative integer $d$ denote by $B_\mathcal{R}(n,d)$ the largest number of codewords in a linear code of length $n$ and homogeneous distance at least $d$, i.e., \[B_\mathcal{R}(n,d) := \max \,\{|\mathcal{C}| \mid \mathcal{C} \subseteq \mathcal{R}^n \text{ linear code},\, d(\mathcal{C}) \geq d\},\]
where we set the maximum to be 0 if the set is empty.

For $\delta \geq 0$, the asymptotic rate is defined as
\[\beta_\mathcal{R}(\delta) := \limsup_{n \rightarrow \infty} \frac{1}{n} \log_{|\mathcal{R}|} (B_\mathcal{R}(n,\delta n)).\]
\end{definition}
 
From the Singleton bound that involves the $\mathcal{R}$-dimension of the code, we can immediately derive an asymptotic version: Take $\mathcal{\mathcal{C}}$ to be a linear code over $\mathcal{R}$ of length $n$ with $|\mathcal{\mathcal{C}}| = B_{\mathcal{R}}(n, \delta n)$ and $d_{\text{Hom}}(\mathcal{C}) \geq \delta n$. Then by \Cref{singleton2}, 
\[\log_{|\mathcal{R}|}(|\mathcal{C}|) \leq n - \left\lfloor \frac{d_{\text{Hom}}(\mathcal{C})(q-1)}{q}\right\rfloor_h \leq n - \left\lfloor \frac{\delta n(q-1)}{q}\right\rfloor_h,\] 
and thus after dividing by $n$ and taking the limit we obtain
\[\beta_\mathcal{R}(\delta) \leq 1 - \frac{q-1}{q}\delta.\]
 
The Plotkin bound leads to a tighter asymptotic bound than the Singleton bound.

\begin{theorem}
It holds $\beta_\mathcal{R}(\delta) = 0$ if $1 \leq \delta$, and $\beta_\mathcal{R}(\delta) \leq 1 - \delta$ if $0 \leq \delta \leq 1$.
\end{theorem}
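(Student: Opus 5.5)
We derive the statement from the Plotkin bound of \Cref{plotkin2}, in the same way the classical asymptotic Plotkin bound is obtained: first bound the size of codes whose minimum distance is at least $n$ (up to lower-order terms), and then reduce the general case to this one by shortening. Throughout we write $|\mathcal{R}| = q^s$ and restrict to linear codes, as in \Cref{asymdef}.

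\textbf{Step 1: the case $\delta \ge 1$.} Let $\mathcal{C}$ be linear with $d = d_{\text{Hom}}(\mathcal{C}) \ge \delta n$. By \Cref{plotkin2}, $d \le \frac{|\mathcal{C}|}{|\mathcal{C}|-1}\, n$. Combining with $d \ge \delta n$ and rearranging gives $|\mathcal{C}|(\delta - 1) \le \delta$. If $\delta > 1$, this yields $|\mathcal{C}| \le \frac{\delta}{\delta-1}$, a constant independent of $n$. Hence $\frac{1}{n}\log_{|\mathcal{R}|} B_{\mathcal{R}}(n,\delta n) \to 0$. If $\delta = 1$, the rearrangement gives nothing, so we use a different route. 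We have $|\mathcal{C}| = q^{sk}$ with $k \ge K/s$. Applying \Cref{plotkin1}, $n \le d \le \frac{q^K}{q^K-1}\, n$ holds trivially and again gives nothing. Instead, apply the shortening argument of Step 2 with a slightly smaller $\delta' < 1$ and use monotonicity: $\beta_{\mathcal{R}}(1) \le \beta_{\mathcal{R}}(\delta')$ for every $\delta' < 1$. Once Step 2 gives $\beta_{\mathcal{R}}(\delta') \le 1 - \delta'$, letting $\delta' \to 1$ yields $\beta_{\mathcal{R}}(1) = 0$. Since $B_{\mathcal{R}} \ge 1$ (the zero code always qualifies), every $\beta_{\mathcal{R}}$ is nonnegative, and this gives equality.

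\textbf{Step 2: the case $0 \le \delta < 1$.} Let $\mathcal{C}$ have length $n$ and $d \ge \delta n$. Choose $m = \lfloor (d-\varepsilon)\rfloor$ for a small fixed slack; concretely take $m = \lceil d \rceil - 1$, so that $m < d$. Partition $\mathcal{C}$ according to the values of the last $n-m$ coordinates. Some class $\mathcal{S}$ has size at least $|\mathcal{C}|/|\mathcal{R}|^{n-m}$. Deleting those coordinates from $\mathcal{S}$ gives a subset $\mathcal{S}' \subseteq \mathcal{R}^m$ that is not necessarily linear, but distances are preserved, so its minimum distance is at least $d > m$. Now apply the general, nonlinear version of \Cref{plotkin2} (from \cite{greferath}) to $\mathcal{S}'$: $d \le \frac{|\mathcal{S}'|}{|\mathcal{S}'|-1}\, m$, so $|\mathcal{S}'| \le \frac{d}{d-m}$. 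With $m = \lceil d \rceil - 1$ we have $d - m \ge$ a positive quantity that could be tiny, which is the main obstacle. To control it, choose instead $m = \lfloor d \rfloor - 1$ when $d$ is large, so that $d - m \ge 1$ and $|\mathcal{S}'| \le d \le Mn$. Then $|\mathcal{C}| \le Mn\, |\mathcal{R}|^{n-m}$, and hence
\[
\log_{|\mathcal{R}|} |\mathcal{C}| \le n - \delta n + 2 + \log_{|\mathcal{R}|}(Mn).
\]
Dividing by $n$ and taking the $\limsup$ gives $\beta_{\mathcal{R}}(\delta) \le 1 - \delta$. The bound is trivial for $\delta n < 2$.

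\textbf{Main obstacle.} The delicate point is the choice of $m$: the polynomial factor $d/(d-m)$ must stay subexponential, and we need the nonlinear Plotkin bound because the shortened subset need not be linear. If one wants to stay within linear codes, the alternative is to shorten via a linear subcode instead: take $\mathcal{C}' = \{c \in \mathcal{C} : c_{m+1} = \dots = c_n = 0\}$, which has size at least $|\mathcal{C}|/|\mathcal{R}|^{n-m}$, and apply the linear statement of \Cref{plotkin2} to it.
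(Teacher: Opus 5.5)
Your proof follows the paper's route. For $\delta>1$ you rearrange \Cref{plotkin2} to $|\mathcal{C}|\le \delta/(\delta-1)$. For $\delta\le 1$ you pigeonhole on $n-m$ coordinates, delete them, and apply the nonlinear Plotkin bound with a gap $d-m\ge 1$; this is what the paper does with $n'=\lfloor \delta n-1\rfloor$.

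There is one slip in Step~2. You define $m=\lfloor d\rfloor-1$ from the \emph{actual} minimum distance $d$, and $d$ can be as large as $Mn$ even when $\delta<1$. For instance, $\mathcal{C}=\langle \gamma^{s-1}(1,\dots,1)\rangle$ has $d=Mn$, so for $n\ge 2(q-1)$ you get $m>n$, and ``the last $n-m$ coordinates'' makes no sense.

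The fix is to tie $m$ to the target distance, $m=\lfloor \delta n\rfloor-1$, as the paper does. Then:
\begin{itemize}
\item $m<n$ for $\delta\le 1$, and $\delta n-m\ge 1$;
\item $d\ge\delta n>m$, and since $x\mapsto x/(x-m)$ is decreasing on $x>m$, the Plotkin bound gives $|\mathcal{S}'|\le \delta n/(\delta n-m)\le \delta n$;
\item $n-m\le n-\delta n+2$.
\end{itemize}
Alternatively, cap $m$ at $n$: if $d\ge n+1$, \Cref{plotkin2} applied to $\mathcal{C}$ itself already gives $|\mathcal{C}|\le d/(d-n)\le d$. With $m$ chosen this way, Step~2 covers $\delta=1$ verbatim, so your monotonicity detour for $\delta=1$ is correct but not needed.

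Your closing alternative is a genuine improvement over the paper, not just a fallback. Take $\mathcal{C}'$ to be the kernel of the projection of $\mathcal{C}$ onto the deleted coordinates. Then $|\mathcal{C}'|\ge |\mathcal{C}|/|\mathcal{R}|^{n-m}$, and puncturing $\mathcal{C}'$ on those coordinates yields a \emph{linear} code of length $m$ with minimum distance at least $\delta n$. So only the linear case of \Cref{plotkin2} is needed, which is the case the paper proves elementarily via \Cref{avgweight}. The paper's pigeonhole produces a nonlinear set and therefore has to import the general bound of \cite{greferath}. Since $B_{\mathcal{R}}$ is defined through linear codes anyway, your variant makes this section self-contained at no cost.

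One minor caveat: ``the zero code always qualifies'' holds only for $\delta\le M$ under the convention $d_{\text{Hom}}(\{0\})=M(n+1)$. For $\delta>M$ the set defining $B_{\mathcal{R}}(n,\delta n)$ is empty for large $n$, so there only the upper bound $\beta_{\mathcal{R}}(\delta)\le 0$ is established. This applies to your proof and to the paper's alike.
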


\begin{proof}
First, for $d > n$ we rearrange the Plotkin bound from \Cref{plotkin2} to $|\mathcal{C}| \leq \frac{d}{d-n}$. So if $\delta > 1$ and $\mathcal{C}$ is a linear code that achieves $B_\mathcal{R}(n,\delta n)$, we get $|\mathcal{C}| \leq \frac{\delta n}{\delta n - n}$. It follows that
\begin{align}
\beta_\mathcal{R}(\delta) &\leq \limsup_{n \rightarrow \infty} \frac{1}{n} \log_{|\mathcal{R}|}\left(\frac{\delta n}{\delta n - n}\right)
= \limsup_{n \rightarrow \infty} \frac{1}{n} \log_{|\mathcal{R}|}(\delta n) - \frac{1}{n} \log_{|\mathcal{R}|}((\delta - 1)n) = 0.
\end{align}
Now assume that $0 \leq \delta \leq 1$ and take a linear code $\mathcal{C}$ that achieves $B_\mathcal{R}(n, \delta n)$. Let $n':= \left\lfloor \delta n - 1\right\rfloor < n$. By a pigeonhole argument, there is at least one $(n - n')$-tuple of elements from $\mathcal{R}$ such that the number of codewords in $\mathcal{C}$ starting with this tuple is at least $\frac{|\mathcal{C}|}{|\mathcal{R}|^{n - n'}}$. For this subset of $\mathcal{C}$, delete the first coordinates to obtain the set $\mathcal{C}' \subseteq \mathcal{R}^{n'}$. By our construction, $|\mathcal{C}'| \geq \frac{|\mathcal{C}|}{|\mathcal{R}|^{n - n'}}$ and $d(\mathcal{C}') \geq d(\mathcal{C}) \geq \delta n$. It follows that $n' < d(\mathcal{C}')$ so we can apply our rearranged Plotkin bound to $\mathcal{C}'$ and get 
\[
\frac{|\mathcal{C}|}{|\mathcal{R}|^{n-n'}} \leq |\mathcal{C}'| \leq \frac{d(\mathcal{C}')}{d(\mathcal{C}') - n'} \leq \frac{\delta n}{\delta n - n'} \leq \delta n,
\]
where we have used that $\delta n - n' \geq 1$ and the general fact that for $a > b > c > 0$ it holds $\frac{a}{a-c} < \frac{b}{b-c}.$
Therefore, $|\mathcal{C}| \leq |\mathcal{R}|^{n-n'}\delta n$ and asymptotically
\begin{align}
\beta_\mathcal{R}(\delta) &\leq \limsup_{n \rightarrow \infty} \frac{1}{n} \log_{|\mathcal{R}|}(|\mathcal{R}|^{n-n'}\delta n)  = \limsup_{n \rightarrow \infty} \frac{1}{n} (n - n' + \log_{|\mathcal{R}|}(\delta n)) = 1 - \lim_{n \rightarrow \infty} \frac{n'}{n} = 1 - \delta.
\end{align}
\end{proof}
 
\begin{figure}[ht]
\begin{subfigure}{0.48\textwidth}
\begin{tikzpicture}
\begin{axis}[
    axis lines = left,
    xlabel={$\delta$},
    ylabel={$\beta_\mathcal{R}(\delta)$},
    every axis x label/.style={
        at={(current axis.right of origin)},
        anchor=west
    },
    every axis y label/.style={
        at={(current axis.above origin)},
        anchor=south
    },
    xmin=0, xmax=1.6,
    ymin=0, ymax=1.1,
    xtick={0,1,1.5},
    xticklabels={$0$, $1$, $\frac{3}{2}$},
    ytick={0,1},
    yticklabels={$0$, $1$},
    axis equal image,
]
 
\addplot[red, thick, domain=0:1.5] {1 - (2/3)*x};
\node[red] at (axis cs:1.25,0.35) {Singleton};
 
\addplot[green!50!black, thick, domain=0:1] {1 - x};
\node[green!50!black] at (axis cs:0.55,0.25) {Plotkin};

\end{axis}
\end{tikzpicture}
\caption{Homogeneous metric}
\end{subfigure}
\hfill
\begin{subfigure}{0.48\textwidth}
\begin{tikzpicture}
\begin{axis}[
    axis lines = left,
    xlabel={$\delta$},
    ylabel={$\beta_{\mathbb F_q}^H(\delta)$},
    every axis x label/.style={
        at={(current axis.right of origin)},
        anchor=west
    },
    every axis y label/.style={
        at={(current axis.above origin)},
        anchor=south
    },
    xmin=0, xmax=1.6,
    ymin=0, ymax=1.1,
    xtick={0,0.67,1},
    xticklabels={$0$, $\frac{2}{3}$, $1$},
    ytick={0,1},
    yticklabels={$0$, $1$},
    axis equal image,
]
 
\addplot[red, thick, domain=0:1.5] {1 - x};
\node[red] at (axis cs:0.9,0.35) {Singleton};
 
\addplot[green!50!black, thick, domain=0:1] {1 - (3/2)*x};
\node[green!50!black] at (axis cs:0.35,0.2) {Plotkin};

\end{axis}
\end{tikzpicture}
\caption{Hamming metric}
\end{subfigure}
\caption{Comparison of the asymptotic bounds in the homogeneous metric and the Hamming metric, where $q = 3$}
\label{fig:asymcomp}
\end{figure}
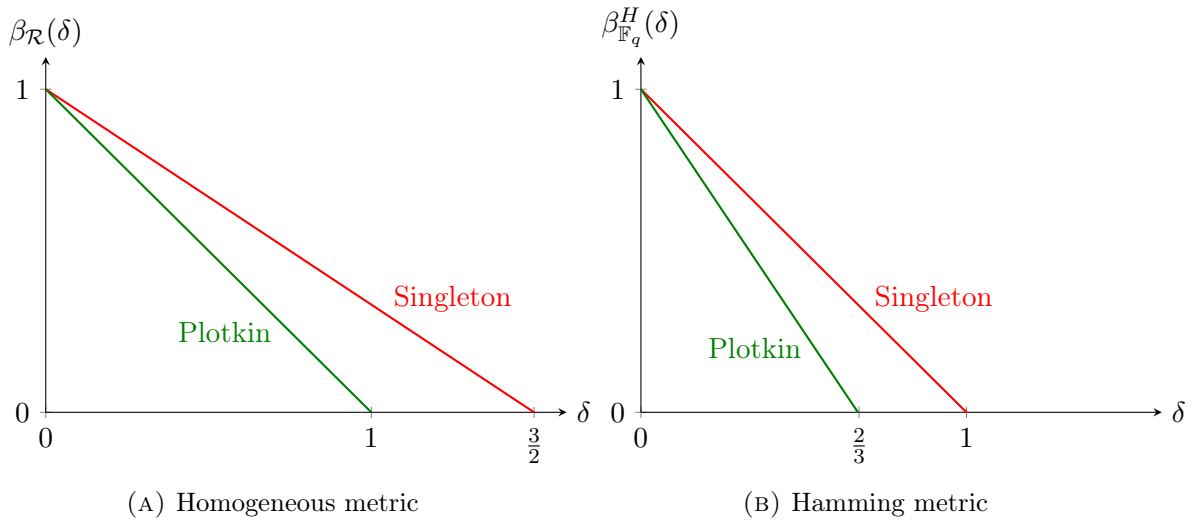

As it turns out, both the Plotkin and Singleton bound asymptotically behave like the classical bounds for a scaled Hamming metric. If we define $\beta_{\mathbb F_q}^H$ analogously to \Cref{asymdef} but with the Hamming metric, it follows  that $\beta_{\mathbb F_q}^H \leq 1 - \delta$ by the asymptotic Hamming metric Singleton bound and $\beta_{\mathbb F_q}^H \leq \max(0, 1 - \frac{q}{q-1}\delta)$ by the asymptotic Hamming metric Plotkin bound. So they only differ by a factor of $\frac{q}{q-1}$ from our results in this section. The analogy is also graphically visible in \Cref{fig:asymcomp}. That the Plotkin bound is asymptotically tighter than the Singleton bound fits together with the fact that there are no MHD or MDS codes for large $n$ by the MDS conjecture.

\section{Conclusion}

In this paper we studied optimal codes for classical bounds in the homogeneous metric over finite chain rings. 
We first revisited Singleton-type bounds in this setting and introduced the notion of Maximum Homogeneous Distance (MHD) codes. 
Our characterization shows that these codes are closely related to classical MDS codes and that their structure is largely determined by the socle of the code. 
This allowed us to analyze the density of MHD codes and to compare their behavior to the classical setting.

We then turned to the Plotkin bound for the homogeneous weight and investigated constant-weight codes attaining this bound. 
Building on Wood's structural results, we identified a missing construction of the shortest-length constant-weight homogeneous code. 
By providing this construction we completed the characterization of such codes.

These results further illustrate how changing the underlying metric can significantly alter the landscape of optimal codes. 
A natural direction for future work is to investigate whether the currently known Singleton-type bound for the homogeneous metric is the most appropriate analogue of the classical Singleton bound. 
In particular, it would be interesting to determine whether tighter bounds can be obtained, similar to the refinements that were recently achieved for the Lee metric in \cite{LeeSB}.
 
\section*{Acknowledgments}
 Violetta Weger’s work was supported by the Technical University of Munich—Institute for Advanced Study.

\bibliographystyle{plainnat}

\bibliography{biblio}

\begin{thebibliography}{14}
\providecommand{\natexlab}[1]{#1}
\providecommand{\url}[1]{\texttt{#1}}
\expandafter\ifx\csname urlstyle\endcsname\relax
  \providecommand{\doi}[1]{doi: #1}\else
  \providecommand{\doi}{doi: \begingroup \urlstyle{rm}\Url}\fi

\bibitem[Ball(2012)]{simeon}
Simeon Ball.
\newblock On sets of vectors of a finite vector space in which every subset of basis size is a basis.
\newblock \emph{Journal of the European Mathematical Society (EMS Publishing)}, 14\penalty0 (3), 2012.
\newblock \doi{10.4171/jems/316}.

\bibitem[Bariffi and Weger(2025)]{LeeSB}
Jessica Bariffi and Violetta Weger.
\newblock Better bounds on the minimal {Lee} distance.
\newblock \emph{SIAM Journal on Discrete Mathematics}, 2025.

\bibitem[Byrne and Weger(2023)]{LeePaper}
Eimear Byrne and Violetta Weger.
\newblock Bounds in the {Lee} metric and optimal codes.
\newblock \emph{Finite Fields and Their Applications}, 87:\penalty0 102151, 2023.
\newblock \doi{10.1016/j.ffa.2022.102151}.

\bibitem[Constantinescu and Heise(1997)]{constantinescu}
Ioana Constantinescu and Werner Heise.
\newblock A metric for codes over residue class rings.
\newblock \emph{Problemy Peredachi Informatsii}, 33:\penalty0 22--28, 1997.
\newblock URL \url{https://www.mathnet.ru/eng/ppi375}.

\bibitem[Greferath and O’Sullivan(2004)]{greferath}
Marcus Greferath and Michael~E. O’Sullivan.
\newblock On bounds for codes over frobenius rings under homogeneous weights.
\newblock \emph{Discrete Mathematics}, 289:\penalty0 11--24, 2004.
\newblock \doi{10.1016/j.disc.2004.10.002}.

\bibitem[Hammons et~al.(1994)Hammons, Kumar, Calderbank, Sloane, and Sole]{hammons}
A.R. Hammons, P.V. Kumar, A.R. Calderbank, N.J.A. Sloane, and P.~Sole.
\newblock The {Z}4-linearity of kerdock, preparata, goethals, and related codes.
\newblock \emph{IEEE Transactions on Information Theory}, 40:\penalty0 301--319, 1994.
\newblock \doi{10.1109/18.312154}.

\bibitem[Hurley(2019)]{hurley}
Ted Hurley.
\newblock {MDS} codes over finite fields, 2019.

\bibitem[Nechaev(1991)]{nechaev}
A.~A. Nechaev.
\newblock Kerdock code in a cyclic form.
\newblock \emph{Discrete Mathematics and Applications}, 1:\penalty0 365--384, 1991.
\newblock \doi{10.1515/dma.1991.1.4.365}.

\bibitem[Nechaev and Honold(1999)]{nechaevhonold}
A.~A. Nechaev and T.~Honold.
\newblock Weighted modules and representations of codes.
\newblock \emph{Problems of Information Transmission}, 35:\penalty0 18--39, 1999.
\newblock URL \url{https://www.mathnet.ru/eng/ppi450}.

\bibitem[Norton and Sălăgean(2000)]{norton}
Graham~H. Norton and Ana Sălăgean.
\newblock On the structure of linear and cyclic codes over a finite chain ring.
\newblock \emph{Applicable Algebra in Engineering, Communication and Computing}, 10:\penalty0 489--506, 2000.
\newblock \doi{10.1007/PL00012382}.

\bibitem[Pyka(2025)]{bachelor}
Andreas Pyka.
\newblock Optimal codes for the homogeneous weight.
\newblock Bachelor's thesis, Technical University of Munich, 2025.

\bibitem[Samei and Mahmoudi(2018)]{samei}
Karim Samei and Saadoun Mahmoudi.
\newblock Singleton bounds for r-additive codes.
\newblock \emph{Advances in Mathematics of Communications}, 12:\penalty0 107--114, 2018.
\newblock \doi{10.3934/amc.2018006}.

\bibitem[Shiromoto(2000)]{shiromoto}
Keisuke Shiromoto.
\newblock Singleton bounds for codes over finite rings.
\newblock \emph{Journal of Algebraic Combinatorics}, 12:\penalty0 95--99, 2000.
\newblock \doi{10.1023/A:1008767703006}.

\bibitem[Wood(2002)]{wood}
Jay~A. Wood.
\newblock The structure of linear codes of constant weight.
\newblock \emph{Transactions of the American Mathematical Society}, 354:\penalty0 1007--1026, 2002.
\newblock \doi{10.1090/S0002-9947-01-02905-1}.

\end{thebibliography}

\end{document}